\documentclass[lettersize,journal]{IEEEtran}
\usepackage{hyperref}
\usepackage{amsmath,amsfonts}
\usepackage{algorithmic}
\usepackage{algorithm}
\usepackage{array}
\usepackage{textcomp}
\usepackage{stfloats}
\usepackage{url}
\usepackage{graphicx}
\usepackage{cite}
\hypersetup{hidelinks}

\usepackage{mathtools} 
\usepackage{amsthm} 
\usepackage{xcolor} 
\usepackage{enumitem} 

\allowdisplaybreaks[4] 

\input{preamble.sty}

\newcommand{\algorithmicparameters}{\textbf{Parameters:}}
\newcommand{\PARAMS}{\item[\algorithmicparameters]}

\theoremstyle{plain} 
\newtheorem{theorem}{Theorem}
\newtheorem{lemma}[theorem]{Lemma}

\newtheorem{corollary}[theorem]{Corollary}

\theoremstyle{definition} 
\newtheorem{problem}{Problem}

\newtheorem{assumption}{Assumption}

\theoremstyle{remark} 

\newif\ifshowchanges

\showchangesfalse   

\usepackage[dvipsnames]{xcolor}
\usepackage{amsmath,amssymb}
\usepackage{environ}

\newcommand{\addsymbol}{
  \raisebox{0.05ex}{\ensuremath{\oplus}}}

\newcommand{\delsymbol}{
  \raisebox{0.05ex}{\ensuremath{\ominus}}}

\newcommand{\chgspace}{\,}

\ifshowchanges

\newcommand{\added}[1]{\textcolor{ForestGreen}{\addsymbol\chgspace #1}}

\newcommand{\deleted}[1]{\textcolor{BrickRed}{\delsymbol\chgspace #1}%
}
\newcommand{\deletedparagraph}[1]{\paragraph*{\textcolor{BrickRed}{\delsymbol\chgspace #1}}}

\else

\newcommand{\added}[1]{#1}
\newcommand{\deleted}[1]{}
\newcommand{\deletedparagraph}[1]{}

\fi

\ifshowchanges

\NewEnviron{addedblock}{
\par
\begingroup
\color{ForestGreen}

\noindent
\addsymbol\chgspace \BODY

\par
\endgroup
}

\NewEnviron{deletedblock}{
\par
\begingroup
\color{BrickRed}

\noindent
\delsymbol\chgspace

\BODY

\par
\endgroup
}

\else

\NewEnviron{addedblock}{
\BODY
}

\NewEnviron{deletedblock}{
}

\fi

\usepackage{subfigure}
\usepackage{makecell}

\begin{document}

\title{Bayesian Signal Component Decomposition via Diffusion-within-Gibbs Sampling}

\author{Yi Zhang,~\IEEEmembership{Member,~IEEE,} Rui Guo,~\IEEEmembership{Member,~IEEE,} Yonina C. Eldar,~\IEEEmembership{Fellow,~IEEE}
	\thanks{This research was supported by the European Research Council (ERC) under the European Union’s Horizon 2020 research and innovation program (grant No. 101000967) and by the Israel Science Foundation (grant No. 536/22). This work is also supported by Manya Igel Centre for Biomedical Engineering and Signal Processing. The authors are with the Faculty of Mathematics and Computer Science, Weizmann Institute of Science, Rehovot 7610001, Israel (Corresponding author: Rui Guo). A preliminary short version of this work has been \deleted{accepted by}\added{presented at} ICASSP 2026\added{~\cite{zhang_bayesian_2026}}.}
	\thanks{}}

\markboth{Journal of \LaTeX\ Class Files,~Vol.~14, No.~8, August~2021}%
{Shell \MakeLowercase{\textit{et al.}}: A Sample Article Using IEEEtran.cls for IEEE Journals}


\maketitle

\begin{abstract}
	In signal processing, the data collected from sensing devices is often a noisy linear superposition of multiple components, and the estimation of components of interest constitutes a crucial pre-processing step. In this work, we develop a Bayesian framework for signal component decomposition, which combines Gibbs sampling with plug-and-play (PnP) diffusion priors to draw component samples from the posterior distribution. Unlike many existing methods, our framework supports incorporating \added{component-wise }model-driven and data-driven \deleted{prior knowledge into the diffusion prior}\added{priors into diffusion models} in a unified manner. Moreover, the proposed posterior sampler allows component priors to be learned separately and flexibly combined \deleted{without retraining}\added{for different decomposition tasks at inference time}. Under suitable assumptions, the proposed Diffusion-within-Gibbs (DiG) sampler provably produces samples from the posterior distribution. We also show that DiG can be interpreted as an extension of a class of recently proposed diffusion-based samplers, and that, for suitable classes of sensing operators, DiG better exploits the structure of the measurement model. Numerical experiments demonstrate the superior performance of our method over existing approaches.
\end{abstract}

\begin{IEEEkeywords}
	Signal component decomposition, source separation, diffusion models, Gibbs sampling, plug-and-play priors.
\end{IEEEkeywords}

\section{Introduction}
\label{sec:introduction}

\IEEEPARstart{S}{ignal} processing emphasizes extracting useful information from measurements of physical quantities (e.g., mechanical and electromagnetic waves). Due to wave superposition and other complex physical phenomena, the data directly collected from sensing devices is often a superposition of multiple components, which may include
\begin{itemize}
	\item several signal components that each carry an individual, self-contained piece of information (for example, a microphone recording may contain overlapping speech from multiple speakers talking simultaneously \cite{luo2018}),

	\item undesired interference components (such as co-channel interference in communications \cite{stuber_principles_2017}, or clutter arising from background scatterers in ultrasound imaging \cite{solomon_deep_2020}),

	\item observation noise induced by the sensing hardware and other small random perturbations.
\end{itemize}

\noindent
To extract useful information from such mixtures, a typical first step in signal processing is to estimate the individual components from the measurements, and to then select the components that are relevant to the downstream task.

In many applications, the relation between the observed signal $\yv$ and the latent components can be accurately or approximately described by a linear superposition model:
\begin{equation}\label{eq:observation}
	\yv = \sum_{k=1}^K \Hm_k \sv_k + \vv,
\end{equation}
where $\sv_k$ denotes a single signal component of interest or an interference component, $\Hm_k$ is a linear operator describing the sensing process, and $\vv$ is the observation noise. The resulting component decomposition problem in \eqref{eq:observation}, i.e., estimation of $(\sv_1,\cdots,\sv_K)$ from $\yv$, is a linear inverse problem. However, due to its highly underdetermined nature, achieving high–quality estimation relies on exploiting structural properties of the components, which leads to two central challenges:
\begin{enumerate}[label=\arabic*)]
	\item how to obtain structural information about real-world signal components and accurately encode such information in a mathematical model, and

	\item how to efficiently incorporate the resulting mathematical model into the computational procedure of signal component decomposition.
\end{enumerate}
Over the past several decades, numerous approaches have been proposed to address these two challenges.

\subsection{Related Work}

In classical model-driven signal processing, prior knowledge about the signal components is primarily obtained from the engineer's understanding of the signal generation and sensing mechanisms. Such physics-inspired priors are carefully encoded as analytically specified constraints, regularization terms, or probability distributions, and are then embedded into tailored signal transform, optimization algorithms, or sampling algorithms for component decomposition.

\textit{Transform-based methods}, such as filter banks \cite{bamberger1992}, wavelet decompositions \cite{mallat1989}, and empirical mode decomposition (EMD \cite{huang1998}), design linear or nonlinear operators that directly decompose an observation into components with prescribed narrowband or approximately periodic structure. These approaches are highly effective when the underlying components exhibit simple, well-separated time–frequency characteristics, but it is difficult to extend such methods to more general signals or to scenarios involving complicated sensing operators.

To obtain more flexible component decomposition frameworks, \textit{optimization-based methods} have been developed. These approaches encode prior information into regularization terms and recover the components by solving optimization problems of the form:
\begin{equation}\label{eq:opt-decomp}
	\min_{\sv_1,\cdots,\sv_K} F
	\dparen{\yv}{\sum_{k=1}^K \Hm_k \sv_k } + \lambda R(\sv_1,\cdots,\sv_K),
\end{equation}
where $F\dparen{\cdot}{\cdot}$ is a data-fidelity term, $R$ is a regularizer encoding the prior structure of the components, and $\lambda>0$ is a tuning parameter. By appropriately designing the regularizer, the formulation enables a flexible combination of sparsity, smoothness, low-rank structure, and other properties within a unified variational formulation, which has led to influential methods such as variational mode decomposition \cite{dragomiretskiy2014}, morphological component analysis \cite{starck2005}, and sparse-plus-low-rank decompositions \cite{candes2011}. When the overall objective in \eqref{eq:opt-decomp} is convex, convergence guarantees to a global minimizer can often be established. However, to capture more realistic signal structures one typically resorts to highly nonconvex regularizers, in which case numerical algorithms may get trapped in local minima.

\textit{Sampling-based methods} adopt a fully Bayesian viewpoint and directly model the prior distribution of the components, aiming to draw approximate samples from the posterior
\begin{equation}
	p_{\sv_1,\cdots,\sv_K|\yv} \propto p_{\yv|\sv_1,\cdots,\sv_K} \times  p_{\sv_1,\cdots, \sv_K},
\end{equation}
using Markov chain Monte Carlo (MCMC \cite{brooks_handbook_2011}) techniques. Here $p_{\sv_1,\cdots,\sv_K}$ is a hand-crafted prior capturing structural properties of the components, and the likelihood $p_{\yv|\sv_1,\dots,\sv_K}$ is determined by the sensing model \eqref{eq:observation}. Since the regularizer $R$ in \eqref{eq:opt-decomp} often admits a probabilistic interpretation as a negative log-prior, many optimization-based methods naturally extend to a sampling framework \cite{horst_bayesian_2025, ding_bayesian_2011}. Moreover, probabilistic modeling allows hierarchical priors that are difficult to formulate deterministically, and standard MCMC convergence theory does not require convexity of the negative log posterior \cite{tierney1994}. Nonetheless, manually specifying the prior distribution often introduces many hyperparameters, and the quality of the decomposition is often highly sensitive to these choices.

While these model-driven methods provide clear structure assumptions and, in some cases, theoretical guarantees, their reliance on simple analytic priors limits their ability to faithfully capture the rich, heterogeneous statistics of real-world signal components, especially in highly underdetermined settings. In the past decade, the advent of deep learning has fundamentally changed how prior information is obtained in component decomposition, enabling learning priors directly from examples.

In transform-based approaches, hand-crafted decomposition transforms have been complemented or replaced by discriminative neural networks that learn a direct mapping from observations to components \cite{wang_multilevel_2018,yu_deep_2024,weninger2014}. While effective, such end-to-end models are typically problem-specific, and require retraining when the number \added{of components, the }\deleted{or }distribution of\added{ even a small subset of }\deleted{ the} components\added{,} or the sensing model changes. In optimization-based approaches, plug-and-play methods \cite{venkatakrishnan2013, romano_little_2017} replace the proximal operators of explicit regularizers in certain proximal-type algorithms (e.g., ADMM \cite{boyd_distributed_2010}) by powerful learned denoisers, while preserving modular treatment of the sensing operators; yet existing theory usually characterizes convergence only to a fixed point of the algorithm \cite{ryu_plug-and-play_2019}, and the precise relation of this fixed point to a well-defined optimization objective is often unclear.

More recently, generative models \added{have been widely used for inference and reconstruction tasks~\cite{liu2024detracking, zhang_irgen_2025}, which has}\deleted{have} opened the door to sampling-based component decomposition with learned priors. Diffusion models, as the state-of-the-art in generative models, allow posterior sampling in a plug-and-play manner for general inverse problems \cite{daras2024}. Some recently proposed diffusion posterior samplers \cite{coeurdoux_pnp_2024, xu2024,wu_principled_2024,dou2024} have been proven to be asymptotically consistent\footnote{A posterior sampler is said to be consistent if the distribution of the generated samples converge to the true posterior distribution.} when the diffusion models are perfectly trained (i.e., the learned priors match the true component priors). However, these methods are typically formulated for a single aggregate unknown and make no explicit use of the multi-component structure inherent in decomposition problems. As a result, when directly applied to component decomposition, they do not fully exploit the structure of the linear mixture, which tends to substantially limit their effectiveness in highly underdetermined settings (cf. Section~\ref{sec:experiments}).

To date, model-based and data-driven priors have typically been employed separately in component decomposition.

\subsection{Proposed Method}\label{subsec:proposed-method}

In this work, we develop a Bayesian framework for signal component decomposition that combines Gibbs sampling with plug-and-play (PnP) diffusion priors. The proposed framework has the following notable features and contributions:
\begin{enumerate}
	\item \deleted{We use diffusion models to encode the component priors in a unified manner, and show how model-driven prior information can be combined with data-driven learning within the diffusion training pipeline. This hybrid modeling enables the use of analytic priors when data are scarce, while still benefiting from learned statistics when representative samples are available.}\added{We use diffusion models to encode component priors in a unified manner, enabling component-wise combinations of learned and model-based priors within the same diffusion-based implementation framework. This hybrid modeling enables the use of analytic priors for components whose training data are scarce, while still benefiting from learned statistics for components with representative samples.}

	\item We introduce a modular posterior sampling algorithm for component decomposition, termed the diffusion-within-Gibbs (DiG) sampler. The DiG sampler alternates Gibbs updates of the individual components using their respective diffusion models, allowing each component prior to be trained independently and combined flexibly\added{ for different decomposition tasks} at inference time\deleted{ without retraining}.

	\item Under the assumption of perfectly trained diffusion models, we establish the asymptotic consistency of the DiG sampler. We further show that DiG can be interpreted as an extension of a class of recently proposed diffusion-based samplers, and that, for suitable classes of sensing operators, DiG better exploits the structure of the measurement model.

	\item We validate the proposed framework through numerical experiments. In particular, on the task of extracting heartbeat signals from motion-induced interference, we demonstrate that, when combined with appropriate model-based priors, the DiG sampler achieves superior decomposition quality while requiring substantially less training data than competing diffusion-based sampling methods designed for generic inverse problems.
\end{enumerate}

The remainder of this paper is organized as follows. Section~\ref{sec:preliminary} reviews the basic principles and implementation details of diffusion models. In Section~\ref{sec:prior-modeling}, we formalize the component decomposition problem, and discuss how to incorporate\deleted{ (both} model-driven and data-driven\deleted{)} component priors into diffusion models. Section~\ref{sec:DiG-sampling} develops the proposed diffusion-within-Gibbs (DiG) sampling algorithm for component decomposition. In Section~\ref{sec:theory}, we establish the asymptotic consistency of the DiG sampler,\added{ discuss the practical implications of the assumptions underlying the consistency theorem, and} clarify \deleted{its relationship to}\added{the relationship between DiG and} some previously proposed diffusion-based samplers\deleted{, and present useful tips on certain implementation issues}. Numerical results on synthetic and real-world examples are reported in Section~\ref{sec:experiments}, followed by concluding remarks in Section~\ref{sec:conclusion}.

\textbf{Notation:}
\added{We follow the common signal-processing convention that bold lowercase letters denote vectors and bold uppercase letters denote matrices.}
We use $\R$ and $\R_{++}$ to denote the sets of real numbers and strictly positive real numbers, respectively.
For a matrix $\Am\in\R^{m\times n}$, $\Am^\top$ denotes its transpose.
Given an ordered tuple $\paren{\sv_k}_{k=1}^K$, for $i\leq j$ we define
$\sv_{i:j}\coloneq \paren{\sv_i,\sv_{i+1},\ldots,\sv_j}$.
For an index set $\Hcal\subset\set{1,2,\ldots,K}$, we define
$\sv_{\Hcal}\coloneq \paren{\sv_k}_{k\in\Hcal}$ and
$\sv_{\neg\Hcal}\coloneq \paren{\sv_k}_{k\notin\Hcal}$.
In particular, for $1\leq k_0\leq K$ we write $\sv_{\neg k_0}\coloneq \sv_{\neg\set{k_0}}$.
For a random variable\added{, e.g.,} $\sv$, we use $\breve{\sv}$ to denote a realization of $\sv$\added{, and we apply the same breve convention whenever distinguishing a random vector from its realized value is necessary}.
We use $\NormalPdf{\cdot}{\muv}{\Sigmam}$ to denote the probability density function of a Gaussian distribution with mean $\muv$ and covariance $\Sigmam$.
Unless stated otherwise, $\nv$ denotes a standard Gaussian random vector that is independent of all other random variables under consideration.

\section{Preliminaries on Diffusion Models}
\label{sec:preliminary}

Stochastic differential equation (SDE)-based diffusion models \cite{song2021} have emerged as a powerful tool for modeling complex probability distributions underlying a collection of data samples. Let $p_{\text{data}}$ denote the data distribution. By gradually injecting noise through a forward diffusion process, one can transform samples from $p_{\text{data}}$ into samples whose distribution approaches a simple Gaussian noise distribution. A diffusion model then generates new samples by approximately simulating the corresponding reverse-time diffusion dynamics, which map Gaussian noise back to the data distribution.

\subsection{Forward and Reverse-Time Diffusion Processes}
\label{sec:forward_reverse_SDEs}

The forward diffusion process is a Markov process $\paren{\xv_t}_{t=0}^T$ with initial state $\xv_0 \sim p_{\text{data}}$ and transition kernel given by the following SDE\footnote{In the general case \cite{song2021}, the forward process is written as $d\xv_t= f(\xv_t,t)dt + g(t)d\wv_t$. Here we set $f(\xv_t,t)\equiv \zerov$ for simplicity, which already covers many state-of-the-art diffusion models \cite{karras2022}.} \cite{song2021,karras2022}:
\begin{equation}\label{eq:forward_SDE}
	d\xv_t = g(t) d\wv_t,
\end{equation}
where $\wv_t$ is a Wiener process and $g(t)>0$ controls the noise injection rate at time $t$. Integrating (\ref{eq:forward_SDE}) yields
\begin{equation}\label{eq:marginal_pdf}
	\xv_t = \xv_0 + \int_{0}^t g(s) d\wv_s \sim p_{ \xv_0 + \sigma(t) \nv} =: p_{\sigma(t)},
\end{equation}
where $\nv\sim\Normal{\zerov}{\Imat}$ is independent of $\xv_0$, and
\begin{equation}\label{eq:sigma_t}
	\sigma(t)\coloneq \sqrt{ \int_{0}^t g(s)^2ds }
\end{equation}
denotes the standard deviation of the injected noise. Notice that the marginal distribution of $\xv_t$ only depends on $\sigma(t)$. When $\sigma(T)$ is much larger than the standard deviation of $p_{\text{data}}$, we may approximate $p_{\xv_T}$ by a Gaussian distribution $\Normal{\zerov}{\sigma(T)^2\Imat}$. Thus as $\xv_t$ evolves from $t=0$ to $T$, the marginal distribution of $\xv_t$ transforms from $p_{\xv_0}=p_{\text{data}}$ to a Gaussian distribution $\Normal{\zerov}{\sigma(T)^2\Imat}$.

The forward process (\ref{eq:forward_SDE}) admits a reverse-time counterpart \cite{anderson1982}. The reverse process is a Markov process $\paren{\bar{\xv}_t}_{t=0}^T$ independent of $\paren{\xv_t}_{t=0}^T$ with initial state being $\bar{\xv}_T\deq\xv_T\sim p_{\sigma(T)}$ and transition kernel given by the reverse-time SDE
\begin{equation}\label{eq:reverse_SDE}
	d\bar{\xv}_t = { - g(t)^2 \nabla_{\bar{\xv}_t} \log p_{\sigma(t)}(\bar{\xv}_t) }dt + g(t) d\bar{\wv}_t,
\end{equation}
where $\bar{\wv}_t$ is a Wiener process independent of $\wv_t$. For any positive integer $N$ and any finite set of times $0\leq t_1< t_2 < \cdots < t_N \leq T$, it is known that \cite{anderson1982}
\begin{equation}\label{eq:Anderson}
	\paren{\bar{\xv}_{t_1},\bar{\xv}_{t_2},\cdots,\bar{\xv}_{t_N}}\deq \paren{{\xv}_{t_1},{\xv}_{t_2},\cdots,{\xv}_{t_N}},
\end{equation}
i.e., if one observes $\bar{\xv}_t$ at a sequence of time points, the joint statistics are indistinguishable from those of $\xv_t$. Thus running $\bar{\xv}_t$ backward in time faithfully reproduces the law of $\xv_t$, which explains why it is referred to as the time reversal of $\xv_t$.

In particular, as $\bar{\xv}_t$ evolves from $t=T$ to $0$, according to (\ref{eq:marginal_pdf}) and (\ref{eq:Anderson}), the marginal distribution of $\bar{\xv}_t$ transforms from $p_{\bar{\xv}_T}=p_{\xv_T} \approx \Normal{\zerov}{\sigma(T)^2\Imat}$ to $p_{\bar{\xv}_0}=p_{\xv_0}=p_{\text{data}}$. Since sampling from $p_{\bar{\xv}_T} \approx \Normal{\zerov}{\sigma(T)^2\Imat}$ is straightforward, simulating \eqref{eq:reverse_SDE} from $t=T$ to $0$ yields samples from $p_{\text{data}}$.

\subsection{Implementation of Diffusion Models}
\label{sec:implementation_dm}

In (\ref{eq:reverse_SDE}), $\nabla \log p_{\sigma(t)}(\cdot)$ (termed \textit{score function}) has no closed-form expression and needs to be approximated by a neural network. By Tweedie's formula \cite{efron2011}, for any $\eta>0$,
\begin{equation}\label{eq:Tweedie}
	\paren{\forall \zv \in \R^d} \;\nabla_{\zv} \log p_\eta(\zv) = \frac{ \CondExpt{ \xv_0 }{ \xv_0 + \eta\nv =\zv } - \zv } { \eta^2 },
\end{equation}
where $\nv\sim\Normal{\zerov}{\Imat}$. Notice that $\CondExpt{ \xv_0 }{ \xv_0 + \eta\nv =\zv }$ is the MMSE estimator of $\xv_0$ given the noisy observation $\xv_0+\eta\nv$, thus it can be approximated by a denoising neural network $\CondExpt{ \xv_0 }{ \xv_0 + \eta\nv =\zv }\approx D_\theta(\zv;\eta)$, leading to
\begin{equation}\label{eq:score_approx}
	\nabla_{\zv} \log p_\eta (\zv) \approx \frac{ D_{\theta}(\zv; \eta) - \zv } { \eta^2 }.
\end{equation}
The reverse-time diffusion process (\ref{eq:reverse_SDE}) is then simulated by numerical integration. A common choice is the Euler–Maruyama method, which discretizes $[0,T]$ into steps $T=t_0>t_1>\cdots>t_{M}=0$. Starting from $\breve{\xv}_{t_0}\sim\Normal{\zerov}{\sigma(T)^2\Imat}$, we can simulate (\ref{eq:reverse_SDE}) by updating
\begin{align}
	\breve{\xv}_{t_{i+1}} \gets \breve{\xv}_{t_i} + g(t_i)^2 \frac{ D_{\theta}(\breve{\xv}_{t_i}; \sigma(t_i)) - \breve{\xv}_{t_i} } { \sigma(t_i)^2 } h_i + g(t_i)\sqrt{h_i}\,\epsv_i,  \label{eq:diff-integral-solver}
\end{align}
where $h_i\coloneq t_i-t_{i+1}$, $\epsv_i\sim\Normal{\zerov}{\Imat}$, and the score function in (\ref{eq:reverse_SDE}) is replaced by (\ref{eq:score_approx}). Then $\breve{\xv}_{t_M}$ serves as an approximate sample from $p_{\text{data}}$. In practice, the denoising network and numerical solver together constitute a diffusion model.

\section{Prior Modeling}
\label{sec:prior-modeling}

In this section, we first formalize the signal component decomposition problem and state the basic assumptions used throughout the paper. We then discuss how to incorporate\deleted{ both} model-driven and data-driven component prior information into diffusion models. The proposed hybrid prior modeling mechanisms play a key practical role in enabling high-quality decompositions in regimes where \deleted{only limited training data are available}\added{training data are available for only a subset of the components}.

\subsection{Problem Formulation and Assumptions}
\label{subsec:formulation}

As described in Section~\ref{sec:introduction}, we consider a linear sensing model in which the observed signal
\(\yv \in \mathbb{R}^m\) is a noisy linear mixture of \(K\) latent components \(\sv_1,\sv_2,\dots,\sv_K\):
\begin{equation}
	\yv \;=\; \sum_{k=1}^{K} \Hm_k \sv_k + \vv. \tag{\ref{eq:observation}}
\end{equation}
Here we adopt a Bayesian viewpoint, i.e., we regard each \(\sv_k \in \mathbb{R}^{d_k}\) as a random vector, with \(\Hm_k \in \mathbb{R}^{m \times d_k}\) being a known sensing matrix and \(\vv \in \mathbb{R}^m\) being random observation noise. We further make the following assumptions.

\begin{assumption}[Available prior knowledge]\label{assump:prior}
	For each \(1 \leq k \leq K\), we are given
	\begin{itemize}
		\item either an analytic form\footnote{The analytic form $f_k$ does not need to be a proper probability density function; in particular, we do not require $f_k$ to be normalized, or even to be integrable over $\R^{d_k}$.} \(f_k(\cdot)\) that approximates the prior density \(p_{\sv_k}\) of the \(k\)-th component up to a constant, or

		\item a set of i.i.d.\ samples
		      \(\{\breve{\sv}_{k,l}\}_{l=1}^{L_k}\) drawn from \(p_{\sv_k}\), from which \(p_{\sv_k}(\cdot)\) can be estimated in a data-driven manner.
	\end{itemize}
\end{assumption}

\begin{assumption}[Noise statistics]\label{assump:noise}
	The observation noise $\vv$ follows a Gaussian distribution \(\vv \sim \mathcal{N}(\mathbf{0}, \sigma_v^2 \Imat)\), and the variance \(\sigma_v^2\) is assumed to be known.
\end{assumption}

\begin{assumption}[Component independence]\label{assump:independence}
	The random vectors \(\sv_1,\sv_2,\dots,\sv_K\) and \(\vv\) are mutually independent.
\end{assumption}

Our objective is formulated as follows.

\begin{problem}[posterior sampling for component decomposition]\label{prob:post_samp}
Given a realization \(\breve{\yv}\) of the observation \(\yv\),
draw samples from the joint posterior distribution
$p_{\sv_1,\dots,\sv_K \mid \yv}(\cdot \mid \breve{\yv})$.
\end{problem}

Note that we explicitly target posterior sampling rather than directly computing point estimates of the components. In particular, posterior samples can be used to approximate the MMSE estimators of \(\sv_1,\dots,\sv_K\) (via sample averages), as well as more sophisticated uncertainty quantification metrics such as credible intervals \cite{edwards1963bayesian}, thereby yielding point estimates together with associated confidence measures.

In the following, we highlight three points to illustrate the generality of the above problem formulation and to justify the modeling assumptions.

\paragraph*{1) Correlated components} In some decomposition problems, several latent components may be strongly correlated. Suppose that, in \eqref{eq:observation}, a group of components \(\sv_{k_1},\sv_{k_2},\dots,\sv_{k_p}\) are statistically dependent within the group, but are jointly independent of the remaining components\added{ $\{\sv_k\}_{k\notin\{k_1,\dots,k_p\}}$} and the noise\added{ $\vv$}. Then one possible approach is to define an effective component $\sv'$ and its corresponding sensing matrix $\Hm'$ as
\begin{align*}
	\sv' & \coloneqq
	\begin{bmatrix}
		\sv_{k_1}^\top & \sv_{k_2}^\top & \cdots & \sv_{k_p}^\top
	\end{bmatrix}^\top, \\[1mm]
	\Hm' & \coloneqq
	\begin{bmatrix}
		\Hm_{k_1} & \Hm_{k_2} & \cdots & \Hm_{k_p}
	\end{bmatrix},
\end{align*}
and treat $\sv'$ as a new component to be inferred, whereby the observation model is rewritten as
\[
	\yv
	= \sum_{k \notin \{k_1,\dots,k_p\}} \Hm_k \sv_k
	+ \Hm' \sv' + \vv
\]
with \deleted{all the components involved}\added{$\set{\sv_k}_{k\not\in\set{k_1,\dots,k_p}}$, $\sv'$, and $\vv$} being\added{ mutually} independent. Hence the component independence assumption still applies at the level of component groups.

\paragraph*{2) Blind or uncertain sensing operators} In some scenarios, the sensing matrices \(\Hm_k\) may be unknown or slowly varying (e.g., in multichannel audio recording, \(\Hm_k\) may depend on the source positions). If the sensing matrix $\Hm_{k_0}$ in \eqref{eq:observation} is unknown, we suggest introducing an effective component
\[
	\sv' \coloneqq \Hm_{k_0}\sv_{k_0},
\]
and treating \(\sv'\) as the new latent variable to be inferred. By absorbing the uncertainty in \(\Hm_{k_0}\) into the prior of \(\sv'\), our formulation applies to a class of blind or partially blind settings where the goal is to decompose the mixture into additive contributions, rather than to identify the sensing matrices themselves.

\begin{deletedblock}
	\paragraph*{3) Non-Gaussian observation noise} In practice, the measurement noise may be non-Gaussian (e.g., Poisson noise in tomographic imaging \cite{bouman_unified_1996}). Such non-Gaussian noise can often be modeled explicitly as an additional independent component \(\sv_k\) in \eqref{eq:observation}. In this view, the Gaussian term \(\vv \sim \mathcal{N}(0,\sigma_v^2\Imat)\) in \eqref{eq:observation} is primarily a residual error term that accounts for model mismatch and small perturbations, thus the precise choice of \(\sigma_v^2\) is not critical and can be fixed for convenience without limiting the generality of the formulation.
\end{deletedblock}

\begin{addedblock}
	\paragraph*{3) Non-Gaussian, structured, or correlated observation noise} In practice, the observation noise may be non-Gaussian (e.g., Poisson noise in tomographic imaging \cite{bouman_unified_1996}), structured (e.g., noise with block structure), or correlated with signal components. One possible approach is to model such noise as an additional component $\sv_k$ in \eqref{eq:observation}, equipped with an appropriate prior model or dataset, while retaining $\vv\sim\Normal{\zerov}{\sigma_v^2\Imat}$ as an independent Gaussian residual perturbation. Once the observation noise is treated as a component, its distributional and structural properties can be incorporated into the design or training of the corresponding diffusion-model denoiser (cf. Section~\ref{subsec:train_diff_model}). If the noise is correlated with some signal components, it can be grouped with those components into a joint effective component, following the correlated-component construction above. Under this interpretation, non-Gaussian, structured, or correlated noise can be handled within the same observation model \eqref{eq:observation} through problem reformulation, rather than by modifying the subsequent sampling algorithm itself.
\end{addedblock}

\subsection{Incorporating Component Priors into a Diffusion Model}
\label{subsec:train_diff_model}

We next discuss how to incorporate the priors of individual signal components into diffusion models. As reviewed in Section~\ref{sec:implementation_dm}, for each \(1 \leq k \leq K\), implementing a diffusion model that samples from \(p_{\sv_k}\) essentially reduces to obtaining a denoiser \(D^{(k)}_\theta(\cdot;\eta)\) that approximates the MMSE estimator at all noise levels \(\eta>0\), i.e.,
\begin{align}
	D^{(k)}_\theta(\sv_k + \eta \nv;\eta) \approx \CondExpt{\sv_k}{\sv_k + \eta \nv} \nonumber \\
	= \underset{D}{ \argmin }~  \Expt{ \norm{ \sv_k - D(\sv_k + \eta\nv) }^2_2 }, \label{eq:MMSE-def}
\end{align}
where \(\nv\sim\Normal{\zerov}{\Imat}\) is independent of \(\sv_k\), and the minimization in \eqref{eq:MMSE-def} is taken over all measurable functions $D$.

In the machine learning literature, the denoiser $D^{(k)}_\theta\paren{\cdot;\eta}$ is typically learned from a set of i.i.d.\ samples \(\{\breve{\sv}_{k,l}\}_{l=1}^{L_k}\) drawn from \(p_{\sv_k}\). The population risk in \eqref{eq:MMSE-def} is then approximated by the empirical loss
\begin{equation}
	\frac{1}{L_k} \sum_{\ell=1}^{L_k}
	\mathbb{E}_{\nv \sim \mathcal{N}(\mathbf{0},\Imat)}
	\Bigl[
		\bigl\|\breve{\sv}_{k,\ell}
		- D^{(k)}_\theta\bigl(\breve{\sv}_{k,\ell} + \eta \nv;\eta\bigr)
		\bigr\|_2^2
		\Bigr],
	\label{eq:empirical-SE}
\end{equation}
and the network parameters \(\theta\) are optimized to minimize \eqref{eq:empirical-SE} over a range of noise levels \(\eta\).

However, such a purely data-driven training strategy is not fully aligned with the requirements of many signal processing applications. In numerous domains such as medical imaging\added{ \cite{solomon_deep_2020}}, it is often difficult to obtain large numbers of clean component samples for each \(\sv_k\). Moreover, classical signal processing research has provided rich (though simplified) descriptions of component structure, and ignoring such model-based prior information is obviously wasteful. \deleted{Below we discuss two relevant scenarios, and present practical strategies that take models into account.}\added{This motivates a hybrid prior modeling strategy for component priors: data-driven diffusion models can be used when representative component samples are available, whereas model-based priors can be used when training data are scarce but reliable structural knowledge is available.}

\begin{deletedblock}
	\paragraph*{1) Data-plus-model scenarios}
	When in addition to a collection of training samples $\set{\breve{\sv}_{k,l}}_{l=1}^{L_k}$, an analytic model-based prior \(f_k\) is also available, a natural way to incorporate $f_k$ into the training of \(D^{(k)}_\theta\) is to penalize outputs that lie in low-density regions under \(f_k\). Concretely, assume that $f_k$ is differentiable almost everywhere and that its gradient can be computed efficiently, we propose training the denoiser by minimizing the regularized empirical risk
	\begin{align}
		L(\theta) \coloneq & \inv{L_k} \sum_{l=1}^{L_k} \Ebb_{\nv\sim\Normal{\zerov}{\Imat}} \bigg[ \norm{ \breve{\sv}_{k,l} - D^{(k)}_\theta\paren{\breve{\sv}_{k,l} + \eta \nv ; \eta} }^2_2  \nonumber \\
		                   & \hspace{7em} - \lambda \log f_k \paren{ D^{(k)}_\theta\paren{\breve{\sv}_{k,l} + \eta \nv ; \eta } } \bigg], \label{eq:data-plus-model-DM-training}
	\end{align}
	where \(\lambda>0\) is a tuning parameter.
\end{deletedblock}

\paragraph*{\deleted{2) Model-only scenarios}\added{Model-based component priors}}
\deleted{Sometimes it may be impossible to obtain isolated observations of individual components (e.g., clutter suppression in ultrasound imaging \cite{solomon_deep_2020}). In this case, a straightforward solution is to train \(D^{(k)}_\theta\) on synthetic data generated from a sufficiently accurate prior model. However, this requires a strong and realistic model for \(\sv_k\). When only relatively simple model knowledge is available,}\added{Model-based prior information can enter the diffusion framework through two routes. The first route is simulation-based: when a sufficiently realistic prior model is available, one may generate a large collection of synthetic component samples from this model and train \(D^{(k)}_\theta\) using the empirical denoising loss in \eqref{eq:empirical-SE}. In this route, model-based knowledge enters the diffusion model indirectly through the synthetic training distribution.}

\added{In the second route, the model knowledge is used directly in computing the denoising operator. When only relatively simple model knowledge is available, rather than learning \(D^{(k)}_\theta\) from potentially misspecified synthetic data,} a more practical alternative is to use a hand-crafted\added{ yet computationally efficient approximate} denoiser, e.g., the MAP estimator
\begin{equation}
	D^{(k)}_{\text{MAP}}(\zv;\eta)
	\coloneq
	\arg\min_{\breve{\sv}_k}
	\biggl\{
	\frac{1}{2\eta^2}\bigl\|\zv - \breve{\sv}_k\bigr\|_2^2
	- \log f_k(\breve{\sv}_k)
	\biggr\}, \label{eq:model-only-DM-training}
\end{equation}
in place of the parameterized denoiser \(D^{(k)}_\theta(\zv;\eta)\)\added{ at all noise levels required by the diffusion model}. Although this \added{MAP-based} approximation \added{is not equivalent to the true MMSE denoiser and} introduces\added{ additional} approximation error\deleted{ relative to the true MMSE denoiser}, for simple priors \(f_k\) (e.g., smoothness-promoting priors) the MAP estimator may admit a closed-form expression or a very efficient solver, allowing us to substitute a cheap \deleted{analytic}\added{model-based} denoiser for the denoising network and thereby\deleted{ substantially} reduce \deleted{computational cost}\added{both training and inference costs}.

\deleted{As illustrated in Section~\ref{subsec:heartbeat-extraction} for the task of extracting heartbeat signals from motion-induced interference, when assisted with proper model-based priors, the techniques above enable us to dramatically reduce the amount of training data and computational time, with only minor loss in decomposition performance.}\added{The second, direct route is validated in Section~\ref{subsec:heartbeat-extraction} for the task of extracting heartbeat signals from motion-induced interference. The results demonstrate that, when assisted with appropriate model-based priors, the hybrid prior modeling strategy can effectively reduce the dependence on training data and improve computational efficiency.}

After \deleted{training}\added{obtaining} diffusion models for sampling $p_{\sv_k}$, we entirely replace the prior information specified in Assumption~\ref{assump:prior} with diffusion-model-based priors, which will then be used to solve the subsequent posterior sampling problem (Problem~\ref{prob:post_samp}). To facilitate the derivation of the proposed posterior sampling algorithm in the next section, we introduce an idealized assumption of perfect diffusion models, meaning that the reverse diffusion processes are simulated exactly. This assumption allows us to isolate the algorithmic structure from errors due to imperfect score estimation and numerical integration.

\begin{assumption}[Perfect diffusion models]\label{assump:diff}
	For any $1\leq k\leq K$, consider the following reverse-time SDE $\paren{ \bar{\xv}_{k,t} }_{t=0}^T$ for sampling $p_{\sv_k}$:
	\begin{equation}\label{eq:reverse-SDE-sk}
		d\bar{\xv}_{k,t} = { - g(t)^2 \nabla_{\bar{\xv}_{k,t}} \log p^{(k)}_{\sigma(t)}(\bar{\xv}_{k,t}) }dt + g(t) d\bar{\wv}_{k,t},
	\end{equation}
	where $g(t)\added{>0}$ is a user-specified noise schedule function,\footnote{In fact, one may choose different terminal time $T$ and noise schedules $g(t)$ for different $k$ (accordingly, $\sigma(t)$ will also depend on $k$ via \eqref{eq:sigma_t}). We do not explicitly write such dependence on $k$ for simplicity.} $\sigma(\cdot)\coloneq \sqrt{\int_{0}^{\paren{\cdot}} g(s)^2 ds }$ is the cumulative injected-noise standard deviation\added{ satisfying $\sigma(T)\gg \sigma_v$}, and for any $t\in[0,T]$,
	\begin{equation}\label{eq:marginal-diff-sk}
		p^{(k)}_{\sigma(t)} \coloneq p_{\bar{\xv}_{k,t}} \coloneq p_{\sv_k + \sigma(t) \nv}
	\end{equation}
	denotes the marginal distribution of $\bar{\xv}_{k,t}$. We assume that there exists an idealized diffusion model such that, for any pair of time instants $0\leq t_1 \leq t_2 \leq T$, the trajectory of $\bar{\xv}_{k,t}$ from $t_2$ to $t_1$ can be simulated exactly, i.e., for any realization $\breve{\xv}_{k,t_2}$ of the initial state $\bar{\xv}_{k,t_2}$, the reverse-time transition $p_{\bar{\xv}_{k,t_1}| \bar{\xv}_{k,t_2} }\cparen{\cdot}{ \breve{\xv}_{k,t_2} }$ is exactly sampleable.
\end{assumption}

\section{Diffusion-within-Gibbs Sampling}\label{sec:DiG-sampling}

\begin{algorithm}[t]
	\caption{Gibbs sampling for Problem~\ref{prob:post_samp}}\label{alg:Gibbs}
	\begin{algorithmic}
		\REQUIRE {Initial sample} $\sv^{(0)}_{1:K}$, {observation} $\breve{\yv}$
		\PARAMS {Number of iterations} $N$
		\ENSURE {Approximate posterior sample} $\breve{\sv}_{1:K}$
		\STATE $\breve{\sv}_{1:K}\gets \sv^{(0)}_{1:K}$
		\FOR{$i=1,2,\cdots,N$}
		\FOR{$k=1,2,\cdots,K$}
		\STATE {Draw} $\breve{\sv}_k  \sim {p}_{\sv_{k} \mid \yv, \sv_{\neg k}}\cparen{\cdot}{\breve{\yv}, \breve{\sv}_{\neg k}}$
		\ENDFOR
		\ENDFOR
		\RETURN $\breve{\sv}_{1:K}$
	\end{algorithmic}
\end{algorithm}

In this section, we build upon the component-wise diffusion priors developed in the previous section, and combine them with a Gibbs sampler to derive a diffusion-within-Gibbs algorithm for solving the component decomposition problem in Problem~\ref{prob:post_samp}.

Gibbs sampling \cite{geman_stochastic_1984} is a standard approach for sampling a multivariate joint distribution such as $p_{\sv_{1:K}|\yv}$ (cf. Algorithm~\ref{alg:Gibbs}), which iteratively updates one component of $\sv_{1:K}$ at a time by sampling from its conditional distribution given the others. However, for general prior distributions $p_{\sv_k}$ ($1\leq k\leq K$), the required conditional samplers are intractable, posing a key challenge for practical implementation.

In this section, we show that for the observation model in \eqref{eq:observation}, these conditional sampling steps are achievable via partial simulation of the reverse diffusion process (\ref{eq:reverse-SDE-sk}).

\subsection{Conditional Sampling via Diffusion When $\Hm_k=\Imat$}

Fix an index \(k\). To draw samples from \(p_{\sv_k \mid \yv,\sv_{\neg k}}\), we first consider the simple case where \(\Hm_k=\Imat\), and show that the sampling of $p_{\sv_k|\yv,\sv_{\neg k}}$ is achievable via diffusion models. By Bayes' rule,
\begin{align}
	p_{\sv_k | \yv, \sv_{\neg k}} \cparen{ \cdot }{ \breve{\yv}, \breve{\sv}_{\neg k} } & \propto  p_{\sv_k | \sv_{\neg k}} \cparen{ \cdot }{ \breve{\sv}_{\neg k} } \times p_{\yv|\sv_k, \sv_{\neg k}} \cparen{\breve{\yv}}{ \cdot, \breve{\sv}_{\neg k} } \nonumber \\
	                                                                                    & \propto p_{\sv_k}(\cdot) \times  e^{-\inv{2\sigma_v^2}\norm{ \breve{\rv}_{\neg k} - \paren{\cdot} }^2_2}, \label{eq:cond_Gibbs_temp1}
\end{align}
where \(\breve{\rv}_{\neg k} \coloneq \breve{\yv} - \sum_{j\neq k} \Hm_j \breve{\sv}_j\) is the residual computed from \(\breve{\sv}_{\neg k}\).
Equation~\eqref{eq:cond_Gibbs_temp1} follows from the component-independence assumption (Assumption~\ref{assump:independence}), the observation model \eqref{eq:observation}, and the condition \(\Hm_k=\Imat\).

On the other hand, let \(\nv\sim \Normal{\zerov}{\Imat}\) be independent of all other random variables, and consider the following auxiliary conditional density:
\begin{align}
	p_{\sv_k|\sv_k+\sigma_v \nv}\cparen{ \cdot }{ \breve{\rv}_{\neg k} } & \propto p_{\sv_k}(\cdot) \times p_{\sv_k + \sigma_v \nv|\sv_k}\cparen{ \breve{\rv}_{\neg k} }{ \cdot } \nonumber                      \\
	                                                                     & \propto p_{\sv_k}(\cdot) \times  e^{-\inv{2\sigma_v^2}\norm{ \breve{\rv}_{\neg k} - \paren{\cdot} }^2_2}. \label{eq:cond_Gibbs_temp2}
\end{align}

Comparing \eqref{eq:cond_Gibbs_temp1} and \eqref{eq:cond_Gibbs_temp2}, we obtain
\begin{equation}\label{eq:cond-Gibbs-alter-form}
	p_{\sv_k \mid \yv, \sv_{\neg k}} \cparen{ \cdot }{ \breve{\yv}, \breve{\sv}_{\neg k} }
	= p_{\sv_k\mid \sv_k+\sigma_v \nv}\cparen{ \cdot }{ \breve{\rv}_{\neg k} },
\end{equation}
where \(\breve{\rv}_{\neg k} \coloneq \breve{\yv} - \sum_{j\neq k} \Hm_j \breve{\sv}_j\). \eqref{eq:cond-Gibbs-alter-form} reveals an important fact: {when $\Hm_k=\Imat$, the Gibbs conditional update is equivalent to sampling from a denoising posterior $p_{\sv_k\mid \sv_k+\eta\nv}$ at a specific noise level $\eta>0$}. Interestingly, it can be further shown (see Lemma~\ref{lemma:denoise-diff}) that {for any $\eta>0$, sampling from \(p_{\sv_k\mid \sv_k+\eta\nv}\) is equivalent to simulating the reverse-time diffusion process $\bar{\xv}_{k,t}$ in Assumption~\ref{assump:diff} from a specific starting time down to $t=0$}, hence leading to a diffusion-based implementation of the Gibbs update.

\begin{lemma}\label{lemma:denoise-diff}
	Let \(\paren{\bar{\xv}_{k,t}}_{t=0}^T\) be defined as in Assumption~\ref{assump:diff}. Then given \(\eta\deleted{>0}\added{\in(0,\sigma(T)]}\), for any \(\zv\in\R^{d_k}\), we have
	\begin{equation}
		p_{\sv_k \mid \sv_k + \eta \nv} \cparen{ \cdot }{ \zv }
		= p_{\bar{\xv}_{k,0} \mid \bar{\xv}_{k, \sigma^{-1}(\eta)}} \cparen{ \cdot }{ \zv },
	\end{equation}
	where \(\sigma^{-1}\) denotes the inverse of the noise-scale function \(\sigma(\cdot)\) in Assumption~\ref{assump:diff}.
\end{lemma}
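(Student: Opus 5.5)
The plan is to reduce the claim to a statement about the forward process and then transfer it to the reverse process through Anderson's time-reversal identity \eqref{eq:Anderson}.

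\textbf{Step 1 (time index).} Set $t^\ast\coloneq\sigma^{-1}(\eta)$. Because $g>0$, the map $\sigma(\cdot)$ in \eqref{eq:sigma_t} is continuous and strictly increasing on $[0,T]$, with $\sigma(0)=0$. Hence for every $\eta\in(0,\sigma(T)]$ there is a unique $t^\ast\in(0,T]$ with $\sigma(t^\ast)=\eta$. This is exactly why the hypothesis is restricted to $\eta\le\sigma(T)$.

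\textbf{Step 2 (forward process).} Consider the forward process $\xv_{k,t}$ for component $k$, which starts at $\xv_{k,0}\sim p_{\sv_k}$ and follows $d\xv_{k,t}=g(t)\,d\wv_{k,t}$. By \eqref{eq:marginal_pdf},
\[
\xv_{k,t^\ast}=\xv_{k,0}+\int_0^{t^\ast} g(s)\,d\wv_{k,s}.
\]
The Itô integral is Gaussian with mean zero and covariance $\sigma(t^\ast)^2\Imat=\eta^2\Imat$, and it is independent of $\xv_{k,0}$. Therefore the pair $(\xv_{k,0},\xv_{k,t^\ast})$ has the same joint law as $(\sv_k,\sv_k+\eta\nv)$.

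\textbf{Step 3 (transfer to the reverse process).} Apply \eqref{eq:Anderson} with the two time points $t_1=0$ and $t_2=t^\ast$. This gives
\[
(\bar{\xv}_{k,0},\bar{\xv}_{k,t^\ast})\deq(\xv_{k,0},\xv_{k,t^\ast})\deq(\sv_k,\sv_k+\eta\nv).
\]
Identical joint laws give identical conditional laws, and conditioning on the second coordinate yields the claimed identity.

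\textbf{Main obstacle.} The delicate point is the phrase ``for any $\zv$'': conditional densities are a priori only defined almost everywhere. To get a statement at every point, I would write both sides explicitly. By Bayes' rule, the joint density is $p_{\sv_k}(\sv)\,\NormalPdf{\zv}{\sv}{\eta^2\Imat}$, and the marginal is $p^{(k)}_\eta(\zv)=\int p_{\sv_k}(\sv)\,\NormalPdf{\zv}{\sv}{\eta^2\Imat}\,d\sv$. Since $\eta>0$, this convolution with a Gaussian is strictly positive and continuous everywhere. Both sides therefore equal
\[
\frac{p_{\sv_k}(\cdot)\,\NormalPdf{\zv}{\cdot}{\eta^2\Imat}}{p^{(k)}_\eta(\zv)}
\]
as canonical versions defined for every $\zv$. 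This is the version implicitly fixed by Assumption~\ref{assump:diff}, where the reverse transition is specified for any initial state.
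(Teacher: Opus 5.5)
Your proof is correct and follows essentially the same route as the paper. Both invert the strictly increasing map $\sigma$ to get $t^\ast=\sigma^{-1}(\eta)$, identify the forward pair $(\xv_{k,0},\xv_{k,t^\ast})$ with $(\sv_k,\sv_k+\eta\nv)$, and transfer the conditional law to the reverse process via the time-reversal identity \eqref{eq:Anderson}; your extra remarks (continuity of $\sigma$ so that every $\eta\in(0,\sigma(T)]$ is attained, the explicit joint-law computation, and fixing the Bayes-formula version so the identity holds for every $\zv$) only make explicit what the paper leaves implicit.
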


\begin{IEEEproof}
	\added{From Assumption~\ref{assump:diff}, we have $\sigma(\cdot)\coloneq \sqrt{\int_{0}^{\paren{\cdot}} g(s)^2 ds }$ and $g(t)>0$, thus $\sigma(\cdot)$ is strictly increasing and hence invertible on $[0,T]$. }Let \(\paren{ \xv_{k,t} }_{t=0}^T\) denote the forward diffusion process associated with \(\paren{ \bar{\xv}_{k,t} }_{t=0}^T\). By \eqref{eq:Anderson} and \eqref{eq:marginal-diff-sk}, we have
	\begin{align*}
		p_{\bar{\xv}_{k,0} \mid \bar{\xv}_{k, t_\eta}} \cparen{ \cdot }{ \zv }
		 & = p_{\xv_{k,0}\mid\xv_{k,t_\eta}} \cparen{ \cdot }{ \zv } \\
		 & = p_{\sv_k \mid \sv_k + \sigma(t_\eta)\nv}
		\cparen{\cdot}{\zv}                                          \\
		 & = p_{\sv_k \mid \sv_k + \eta \nv} \cparen{\cdot}{\zv},
	\end{align*}
	where \(t_\eta \coloneq \sigma^{-1}(\eta)\). This proves the lemma.
\end{IEEEproof}

By \deleted{\eqref{eq:cond-Gibbs-alter-form} and Lemma~\ref{lemma:denoise-diff}}\added{\eqref{eq:cond-Gibbs-alter-form}, Lemma~\ref{lemma:denoise-diff} and $\sigma_v \ll \sigma(T)$ (Assumption~\ref{assump:diff})}, when \(\Hm_k=\Imat\), the conditional update in Gibbs sampling is implemented as follows: let \(t_v \coloneq \sigma^{-1}(\sigma_v)\), set \(\bar{\xv}_{k,t_v}\gets \breve{\rv}_{\neg k}\), and simulate the reverse-time SDE in \eqref{eq:reverse-SDE-sk} from \(t=t_v\) down to \(t=0\). The resulting realization of \(\bar{\xv}_{k,0}\) is then a valid sample from $p_{\sv_k|\yv,\sv_{\neg k}}$. When all \(\Hm_k\) are identity matrices (e.g., in single-channel source separation), this procedure yields a direct implementation of Algorithm~\ref{alg:Gibbs}.

When some sensing matrices are non-identity, a diffusion-model-based implementation of Algorithm~\ref{alg:Gibbs} is no longer straightforward. Nevertheless, we show next that the general case with \(\Hm_k\neq \Imat\) reduces to the identity case via a relaxation technique.

\subsection{Reducing General $\Hm_k$ to the Identity Case via Relaxation}\label{sec:relaxation-technique}

\begin{algorithm}[t]
	\caption{Gibbs sampling under \eqref{eq:observation-relax}}\label{alg:Gibbs-relax}
	\begin{algorithmic}
		\REQUIRE {Initial samples} $\sv^{(0)}_{1:K}$, $\uv^{(0)}_{\Hcal}$, {observation} $\breve{\yv}$
		\PARAMS {Number of iterations} $N$, {relaxation parameters} $\paren{\eta_k}_{k\in\Hcal}\subset\R_{++}$
		\ENSURE {Approximate posterior sample} $\breve{\sv}_{1:K}$
		\STATE $\breve{\sv}_{1:K} \gets \sv^{(0)}_{1:K}$
		\STATE $\breve{\uv}_{\Hcal} \gets \uv^{(0)}_{\Hcal}$
		\FOR{$i=1,2,\cdots,N$}
		\FOR{$k=1,2,\cdots,K$}
		\IF{$k\in\Hcal$}
		\STATE {Draw} $\breve{\uv}_k \sim {p}_{\uv_{k} \mid \yv', \sv_{1:K}, \uv_{\neg k}}
			\cparen{\cdot}{\breve{\yv}, \breve{\sv}_{1:K}, \breve{\uv}_{\neg k}}$
		\ENDIF
		\STATE {Draw} $\breve{\sv}_k \sim {p}_{\sv_{k} \mid \yv', \sv_{\neg k}, \uv_{\Hcal}}
			\cparen{\cdot}{\breve{\yv}, \breve{\sv}_{\neg k}, \breve{\uv}_{\Hcal}}$
		\ENDFOR
		\ENDFOR
		\RETURN $\breve{\sv}_{1:K}$
	\end{algorithmic}
\end{algorithm}

To handle the general case with \(\Hm_k\neq \Imat\), we consider a relaxed observation model that is slightly different from \eqref{eq:observation}:
\begin{equation}\label{eq:observation-relax}
	\yv' = \sum_{\substack{1\leq k\leq K \\ \Hm_k = \Imat}} \sv_k
	+ \sum_{\substack{1\leq k\leq K \\ \Hm_k \neq \Imat}} \Hm_k \paren{\sv_k + \vv_k }
	+ \vv,
\end{equation}
where \(\vv_k \sim \Normal{\zerov}{\eta_k^2 \Imat}\) is independent of all other random variables. It is easy to see that the only difference between \eqref{eq:observation-relax} and \eqref{eq:observation} is the additional perturbation \(\vv_k\) injected into components $\sv_k$ with \(\Hm_k\neq \Imat\), which introduces a modeling mismatch. In particular, when \(\eta_k=0\) for all \(k\), \eqref{eq:observation-relax} reduces to the original observation model \eqref{eq:observation}. \added{Notice that in \eqref{eq:observation-relax}, perturbation is introduced only when necessary, i.e., for components with non-identity sensing operators, since the identity case can already be handled directly without relaxation and introducing additional relaxation would only increase the mismatch to the original observation model.}

For the relaxed model \eqref{eq:observation-relax}, we introduce some auxiliary variables which ease the analysis. Define the set of component indices with non-identity sensing matrices as
\begin{equation}\label{eq:Hcal}
	\Hcal \coloneq \cset{ k\in\set{1,2,\cdots,K} }{ \Hm_k \neq \Imat },
\end{equation}
and for each \(k\in\Hcal\), define the \added{auxiliary} relaxation variable
\begin{equation*}
	\uv_k \coloneq \sv_k + \vv_k.
\end{equation*}
We then apply Gibbs sampling to draw samples from
\(p_{\sv_{1:K}, \uv_{\Hcal}\mid\yv'}\cparen{\cdot}{\breve{\yv}}\); see Algorithm~\ref{alg:Gibbs-relax}.
Since \(p_{\sv_{1:K}\mid\yv'}\cparen{\cdot}{\breve{\yv}}\) is a marginal of
\(p_{\sv_{1:K}, \uv_{\Hcal}\mid\yv'}\cparen{\cdot}{\breve{\yv}}\), discarding the \(\uv_{\Hcal}\) part of a joint sample from \(p_{\sv_{1:K}, \uv_{\Hcal}\mid\yv'}\cparen{\cdot}{\breve{\yv}}\) yields a sample from {the exact posterior \(p_{\sv_{1:K}\mid\yv'}\cparen{\cdot}{\breve{\yv}}\) of the perturbed observation model \eqref{eq:observation-relax}}. Notice that the modeling error of \eqref{eq:observation-relax} is totally determined by the hyperparameters $(\eta_k)_{k\in\Hcal}$, when all \(\eta_k^2\) are sufficiently small, a sample from \(p_{\sv_{1:K}\mid\yv'}\cparen{\cdot}{\breve{\yv}}\) serves as an accurate approximation to a sample from \(p_{\sv_{1:K}\mid\yv}\cparen{\cdot}{\breve{\yv}}\).

In the next two lemmas, we show that the two conditional updates in Algorithm~\ref{alg:Gibbs-relax} admit tractable implementations: updating \(\breve{\uv}_k\) reduces to sampling from a Gaussian distribution (Lemma~\ref{lemma:cpdf-uk}), and updating \(\breve{\sv}_k\) reduces to sampling from \(p_{\sv_k\mid \sv_k+\eta \nv}\) at an appropriate noise level $\eta>0$ (Lemma~\ref{lemma:cpdf-sk}), thus falling back to the identity case discussed in Lemma~\ref{lemma:denoise-diff}. \added{Therefore, by introducing a user-controlled perturbation (i.e., bias)
	into the observation model, we obtain a computationally tractable Gibbs
	update scheme that remains compatible with diffusion-based sampling even
	under non-identity sensing operators.}

\begin{lemma}\label{lemma:cpdf-uk}
	For any \(k\in\Hcal\), given \(\breve{\yv}\), \(\breve{\sv}_{1:K}\), and \(\breve{\uv}_{\neg k}\), the conditional update of \(\breve{\uv}_k\) in Algorithm~\ref{alg:Gibbs-relax} is equivalent to sampling from a Gaussian distribution:
	\begin{equation*}
		p_{\uv_k \mid \yv', \sv_{1:K}, \uv_{\neg k}}
		\cparen{\cdot}{\breve{\yv}, \breve{\sv}_{1:K}, \breve{\uv}_{\neg k}}
		= \NormalPdf{\cdot}{\muv_k}{\Sigmam_k},
	\end{equation*}
	where the mean \(\muv_k\) and covariance \(\Sigmam_k\) are given by
	\begin{align*}
		\Sigmam_k & \coloneq \paren{ \inv{\sigma_v^2} \Hm_k^\top \Hm_k + \inv{\eta_k^2} \Imat }^{-1},                              \\
		\muv_k    & \coloneq \Sigmam_k \paren{ \inv{\sigma_v^2} \Hm_k^\top \breve{\rv}'_{\neg k} + \inv{\eta_k^2} \breve{\sv}_k },
	\end{align*}
	and \(\breve{\rv}'_{\neg k}\coloneq \breve{\yv}
	- \sum_{j\not\in \Hcal} \breve{\sv}_j
	- \sum_{j\in\Hcal,\, j\neq k} \Hm_j \breve{\uv}_j\) denotes the residual in \eqref{eq:observation-relax} with the \(k\)-th term removed.
\end{lemma}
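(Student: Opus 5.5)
The plan is to apply Bayes' rule to the relaxed model \eqref{eq:observation-relax} with $\uv_k$ as the variable, and then complete the square in the exponent. First, I will write the joint density of $(\yv',\sv_{1:K},\uv_{\Hcal})$ as a product. Independence of $\sv_{1:K}$, $(\vv_j)_{j\in\Hcal}$, and $\vv$ (Assumption~\ref{assump:independence} together with the construction of $\vv_k$) and the definition $\uv_j=\sv_j+\vv_j$ give
\[
p_{\sv_{1:K}}\prod_{j\in\Hcal}p_{\uv_j\mid\sv_j}\, p_{\yv'\mid\sv_{1:K},\uv_{\Hcal}}.
\]
Here $p_{\uv_j\mid\sv_j}(\cdot\mid\breve{\sv}_j)=\NormalPdf{\cdot}{\breve{\sv}_j}{\eta_j^2\Imat}$. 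Under Assumption~\ref{assump:noise}, $\yv'$ given $(\sv_{1:K},\uv_{\Hcal})$ is Gaussian with mean $\sum_{j\notin\Hcal}\sv_j+\sum_{j\in\Hcal}\Hm_j\uv_j$ and covariance $\sigma_v^2\Imat$. A key point to state explicitly is that, given $\uv_{\Hcal}$, the observation $\yv'$ depends on $\sv_{\Hcal}$ only through $\uv_{\Hcal}$.

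Next, I will fix $\breve{\yv}$, $\breve{\sv}_{1:K}$ and $\breve{\uv}_{\neg k}$, and view the conditional density of $\uv_k$ as proportional to the joint density in $\uv_k$ alone. Every factor not involving $\uv_k$ is absorbed into the constant, which leaves
\[
p_{\uv_k\mid\ldots}(\uv)\propto \exp\!\Big(-\tfrac{1}{2\eta_k^2}\|\uv-\breve{\sv}_k\|_2^2-\tfrac{1}{2\sigma_v^2}\|\breve{\rv}'_{\neg k}-\Hm_k\uv\|_2^2\Big),
\]
with $\breve{\rv}'_{\neg k}$ as defined in the statement. Expanding, the quadratic term is $-\tfrac12\uv^\top\Sigmam_k^{-1}\uv$ and the linear term is $\uv^\top(\sigma_v^{-2}\Hm_k^\top\breve{\rv}'_{\neg k}+\eta_k^{-2}\breve{\sv}_k)$. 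Completing the square then gives $-\tfrac12(\uv-\muv_k)^\top\Sigmam_k^{-1}(\uv-\muv_k)$ plus a constant.

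To conclude, I need $\Sigmam_k^{-1}=\sigma_v^{-2}\Hm_k^\top\Hm_k+\eta_k^{-2}\Imat$ to be symmetric positive definite, hence invertible. This holds because $\eta_k>0$ and $\Hm_k^\top\Hm_k\succeq 0$. The normalized density is therefore exactly $\NormalPdf{\cdot}{\muv_k}{\Sigmam_k}$.

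The main obstacle is not the algebra but bookkeeping the conditioning correctly. In particular, I must justify that conditioning on $\breve{\uv}_{\neg k}$ and $\breve{\sv}_{\neg k}$ removes all other dependence. The factors $p_{\uv_j\mid\sv_j}$ for $j\neq k$ and $p_{\sv_{1:K}}$ are constant in $\uv_k$. I will handle this by writing the factorization cleanly before dropping any terms.
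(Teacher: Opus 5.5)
Your proposal is correct and follows essentially the same route as the paper: Bayes' rule plus independence reduces the conditional to $p_{\uv_k\mid\sv_k}$ times the Gaussian likelihood of the relaxed model, and the exponents are then matched. The paper expands the candidate density $\NormalPdf{\cdot}{\muv_k}{\Sigmam_k}$ and compares, which is your completing-the-square step run in reverse, and your explicit check that $\Sigmam_k^{-1}$ is positive definite (since $\eta_k>0$) is a detail the paper leaves implicit.
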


\begin{IEEEproof}
	See Appendix~\ref{proof:cpdf-uk}.
\end{IEEEproof}

\begin{lemma}\label{lemma:cpdf-sk}
	For \(1\leq k\leq K\), given \(\breve{\yv}\), \(\breve{\sv}_{\neg k}\), and \(\breve{\uv}_{\Hcal}\), the conditional update of \(\breve{\sv}_k\) in Algorithm~\ref{alg:Gibbs-relax} simplifies to
	\begin{align*}
		 & p_{\sv_k | \yv', \sv_{\neg k}, \uv_{\Hcal}}\cparen{\cdot}{ \breve{\yv}, \breve{\sv}_{\neg k}, \breve{\uv}_{\Hcal} } = \\
		 & \hspace{4em}
		\begin{cases}
			p_{\sv_k | \sv_k + \eta_k \nv}\cparen{\cdot}{\breve{\uv}_k},           & \text{if}~k\in\Hcal,     \\
			p_{\sv_k | \sv_k + \sigma_v \nv}\cparen{\cdot}{\breve{\rv}'_{\neg k}}, & \text{if}~k\not\in\Hcal,
		\end{cases}
	\end{align*}
	where for \(k\not\in\Hcal\), \(\breve{\rv}'_{\neg k}\coloneq \breve{\yv}
	- \sum_{j\not\in \Hcal,\, j\neq k} \breve{\sv}_j
	- \sum_{j\in\Hcal} \Hm_j \breve{\uv}_j\) is the residual in \eqref{eq:observation-relax} with the \(k\)-th term removed.
\end{lemma}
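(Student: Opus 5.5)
The plan is to compute the conditional density of $\sv_k$ by Bayes' rule, using the joint density of $(\yv',\sv_{1:K},\uv_{\Hcal})$ under the relaxed model \eqref{eq:observation-relax}. Then, as in the derivation of \eqref{eq:cond-Gibbs-alter-form}, I will match the resulting unnormalized density with that of a denoising posterior.

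By Assumptions~\ref{assump:noise} and \ref{assump:independence} and the independence of the $\vv_k$, the joint density factorizes as
\begin{align*}
p(\breve{\yv},\breve{\sv}_{1:K},\breve{\uv}_{\Hcal}) = {} & \prod_{j=1}^K p_{\sv_j}(\breve{\sv}_j)\prod_{j\in\Hcal}\NormalPdf{\breve{\uv}_j}{\breve{\sv}_j}{\eta_j^2\Imat} \\
& \times \NormalPdf{\breve{\yv}}{\textstyle\sum_{j\notin\Hcal}\breve{\sv}_j+\sum_{j\in\Hcal}\Hm_j\breve{\uv}_j}{\sigma_v^2\Imat}.
\end{align*}
This uses the fact that, given $\sv_{1:K}$ and $\uv_{\Hcal}$, the observation $\yv'$ depends only on $\sv_{\neg\Hcal}$ and $\uv_{\Hcal}$ plus the Gaussian noise $\vv$. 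The conditional of $\sv_k$ given the rest is proportional to this expression viewed as a function of $\breve{\sv}_k$. All factors that do not involve $\breve{\sv}_k$ are then dropped.

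The argument splits into two cases.

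\emph{Case $k\in\Hcal$.} The likelihood factor depends only on $\breve{\uv}_{\Hcal}$ and $\breve{\sv}_{\neg\Hcal}$, so it does not involve $\breve{\sv}_k$. The remaining $\breve{\sv}_k$-dependent factors are
\[
p_{\sv_k}(\cdot)\,e^{-\frac{1}{2\eta_k^2}\|\breve{\uv}_k-(\cdot)\|_2^2}.
\]
This is exactly the unnormalized form of $p_{\sv_k\mid\sv_k+\eta_k\nv}(\cdot\mid\breve{\uv}_k)$, by the same Bayes computation as in \eqref{eq:cond_Gibbs_temp2} with $\sigma_v$ replaced by $\eta_k$.

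\emph{Case $k\notin\Hcal$.} Here $\breve{\sv}_k$ appears only in $p_{\sv_k}$ and in the likelihood. Writing the mean of the likelihood as $\breve{\sv}_k$ plus the remaining terms, the likelihood becomes
\[
e^{-\frac{1}{2\sigma_v^2}\|\breve{\rv}'_{\neg k}-(\cdot)\|_2^2}.
\]
Matching with \eqref{eq:cond_Gibbs_temp2} at noise level $\sigma_v$ and residual $\breve{\rv}'_{\neg k}$ gives the claim.

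Both sides are probability densities in the same variable that are proportional to one another, so they coincide. The main point requiring care is the bookkeeping in the first case: one must observe that, once $\uv_k$ is conditioned on, $\yv'$ is conditionally independent of $\sv_k$. This is the Markov structure $\sv_k\to\uv_k\to\yv'$, and it is what decouples the sensing matrix $\Hm_k$ from the $\sv_k$-update. Everything else is routine cancellation of constants.
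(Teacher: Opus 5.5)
Your proof is correct and follows essentially the same route as the paper's. You apply Bayes' rule, split into the cases $k\in\Hcal$ and $k\notin\Hcal$, and use that $\yv'$ is conditionally independent of $\sv_k$ given $\uv_{\Hcal}$ in the first case; you then match the resulting unnormalized density with the denoising posterior at level $\eta_k$ or $\sigma_v$. The only cosmetic difference is that you write out the full joint factorization of $(\yv',\sv_{1:K},\uv_{\Hcal})$ explicitly, whereas the paper works directly with the conditional factors $p_{\sv_k\mid\uv_k}$ and the likelihood.
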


\begin{IEEEproof}
	See Appendix~\ref{proof:cpdf-sk}.
\end{IEEEproof}

Combining Lemma~\ref{lemma:cpdf-sk} with Lemma~\ref{lemma:denoise-diff}, we conclude that the conditional update of \(\breve{\sv}_k\) in Algorithm~\ref{alg:Gibbs-relax} is implementable via a warm-started simulation (with $\breve{\uv}_k$ for $k\in\Hcal$\added{ with $\eta_k\in (0, \sigma(T)]$,} or $\breve{\rv}'_{\neg k}$ for $k\not\in\Hcal$ being the initial state) of the reverse diffusion process $\bar{\xv}_{k,t}$ described in Assumption~\ref{assump:diff}. Plugging these results into Algorithm~\ref{alg:Gibbs-relax} yields the proposed diffusion-within-Gibbs sampler; see Algorithm~\ref{alg:DiG}. \deleted{In each sweep, Algorithm~\ref{alg:DiG} performs \(K\) conditional updates; each update runs a reverse diffusion with \(M\) steps and thus requires \(M\) evaluations of the corresponding denoising network.
	Hence, the per-sweep computational complexity is \(O(KM)\), while the memory complexity is \(O(K)\) for storing the \(K\) component variables, plus the storage of the denoising networks.}

\begin{algorithm}[t]
	\caption{Diffusion-within-Gibbs (DiG) sampler}\label{alg:DiG}
	\begin{minipage}{\linewidth}
		\begin{algorithmic}
			\REQUIRE {Initial samples} $\sv^{(0)}_{1:K}$, $\uv^{(0)}_{\Hcal}$, {observation} $\breve{\yv}$
			\PARAMS {Number of iterations} $N$, {relaxation parameters} $\paren{\eta_k}_{k\in\Hcal}\subset\deleted{\R_{++}}\added{(0,\sigma(T)]}$
			\ENSURE {Approximate posterior sample} $\breve{\sv}_{1:K}$
			\STATE $\breve{\sv}_{1:K} \gets \sv^{(0)}_{1:K}$
			\STATE $\breve{\uv}_{\Hcal} \gets \uv^{(0)}_{\Hcal}$
			\FOR{$i=1,2,\cdots,N$}
			\FOR{$k=1,2,\cdots,K$}
			\STATE $\breve{\rv}'_{\neg k}\gets \begin{cases}
					\breve{\yv} - \sum_{j\not\in \Hcal} \breve{\sv}_j - \sum_{\substack{j\in\Hcal \\ j\neq k}} \Hm_j \breve{\uv}_j, & \text{if}~k\in\Hcal,\\
					\breve{\yv} - \sum_{\substack{j\not\in \Hcal                                  \\j\neq k}} \breve{\sv}_j - \sum_{j\in\Hcal} \Hm_j \breve{\uv}_j, & \text{if}~k\not\in\Hcal.
				\end{cases}$
			\IF{$k\in\Hcal$}
			\STATE ${\Sigmam}_k \gets \paren{ \inv{\sigma_v^2} \Hm_k^\top \Hm_k + \inv{\eta_k^2} \Imat }^{-1}$
			\STATE ${\muv}_k \gets \Sigmam_k \paren{ \inv{\sigma_v^2} \Hm_k^\top \breve{\rv}'_{\neg k} + \inv{\eta_k^2} \breve{\sv}_k }$
			\STATE {Draw} $\breve{\uv}_k  \sim \NormalPdf{ \cdot }{ {\muv}_k }{ {\Sigmam}_k}$
			\ENDIF
			\STATE {Draw} $
				\breve{\sv}_k\sim
				\begin{cases}
					p_{\bar{\xv}_{k,0} \mid \bar{\xv}_{k,\sigma^{-1}(\eta_k)}}\cparen{\cdot}{\breve{\uv}_k},            & \text{if}~k\in\Hcal,     \\
					p_{\bar{\xv}_{k,0} \mid \bar{\xv}_{k, \sigma^{-1}(\sigma_v)}}\cparen{\cdot}{\breve{\rv}'_{\neg k}}, & \text{if}~k\not\in\Hcal.
				\end{cases}$
			\ENDFOR
			\ENDFOR
			\RETURN $\breve{\sv}_{1:K}$
		\end{algorithmic}
	\end{minipage}
\end{algorithm}

\subsection{\deleted{Parameter Annealing Technique}\added{Adaptive Warm-Up Strategy}}
\label{subsec:warm-up}

\deleted{In practice, Gibbs samplers are often equipped with a parameter-annealing strategy \cite{geman_stochastic_1984}. For the proposed DiG sampler, we suggest adopting the following annealing scheme: in the \(i\)-th iteration of Algorithm~\ref{alg:DiG}, we replace the observation-noise level \(\sigma_v\) with \(\sigma^{(i)}_v\geq \sigma_v\), and for each \(k\in\Hcal\), we replace the modeling mismatch level \(\eta_k\) with \(\eta^{(i)}_k\geq \eta_k\). As the iteration index \(i\) increases, we gradually decrease \(\sigma^{(i)}_v\) and \(\eta^{(i)}_k\) so that they converge to \(\sigma_v\) and \(\eta_k\), respectively.}

\begin{addedblock}
	In practice, MCMC samplers are often equipped with an adaptive warm-up
	(or burn-in) stage to improve mixing and move the Markov chain closer to its
	stationary regime \cite{geman_stochastic_1984}. Following this common
	practice, we suggest executing the proposed DiG sampler in two stages:
	\begin{enumerate}[label=\arabic*)]
		\item \emph{Adaptive warm-up stage}: the sampler is first run for a finite
		      number $N_{\mathrm{w}}$ of iterations using annealed parameters. In the \(i\)-th warm-up
		      iteration, we replace the observation-noise level \(\sigma_v\) with
		      \(\sigma_v^{(i)} \in [\sigma_v,\sigma(T)]\), and for each \(k\in\Hcal\), we replace
		      the modeling mismatch level \(\eta_k\) with
		      \(\eta_k^{(i)} \in [\eta_k,\sigma(T)]\). As the warm-up iteration index \(i\)
		      increases, \(\sigma_v^{(i)}\) and \(\eta_k^{(i)}\) are gradually reduced
		      to the target values \(\sigma_v\) and \(\eta_k\),
		      respectively.

		\item \emph{Stationary sampling stage}: after the warm-up stage terminates, the DiG iteration is conducted with fixed target parameters, i.e., \(\sigma_v^{(i)}\equiv\sigma_v\) and \(\eta_k^{(i)}\equiv\eta_k\) for all subsequent iterations \(i>N_{\mathrm{w}}\).
	\end{enumerate}
\end{addedblock}

Intuitively, inflating the noise and mismatch parameters above their nominal values amounts to replacing each conditional distribution in the Gibbs updates (Algorithm~\ref{alg:Gibbs-relax}) with a ``flatter'' one. This typically allows the Markov chain to explore the state space more freely during \deleted{the early iterations}\added{the warm-up stage}, which empirically accelerates convergence. Although annealing makes each update no longer an exact Gibbs step for the target posterior at \deleted{intermediate}\added{warm-up} iterations, as long as we stop annealing within a finite number of iterations, it does not affect the \deleted{consistency}\added{convergence} of the sampler \added{to the relaxed posterior distribution}, as established by Theorem~\ref{thm:consistency} in the next section.

For the DiG sampler, parameter annealing \added{in the warm-up stage} serves an additional practical purpose. Recall that the conditional draw of \(\breve{\sv}_k\) in Algorithm~\ref{alg:DiG} is implemented by simulating a diffusion process \(\bar{\xv}_{k,t}\), whose numerical realization relies on a denoiser
\begin{equation*}
	D^{(k)}_{\theta}\paren{\zv;\eta} \approx \CondExpt{\sv_k}{ \sv_k + \eta \nv =\zv }.
\end{equation*}
When training \(D^{(k)}_{\theta}(\cdot;\eta)\) from data, the input samples are realizations of \({\sv_k+\eta\nv}\), and the target outputs are the corresponding clean samples \(\sv_k\). As a result, the denoiser is typically well-trained on the high-density region of \(p_{\sv_k+\eta\nv}\), while its approximation error may be much larger in low-density regions. However, during the early stages of DiG, the iterates \(\breve{\sv}_k\) are often strongly influenced by initialization, thus the inputs to the diffusion models (i.e., \(\breve{\uv}_k\) and \(\breve{\rv}'_{\neg k}\)) may lie outside the regions frequently seen during training. Feeding such out-of-distribution inputs into the pre-trained denoisers may therefore incur large errors. Parameter annealing mitigates this issue by calling \(D^{(k)}_{\theta}(\cdot;\eta)\) with a larger noise level \(\eta\), which effectively enlarges the high-probability region covered by the training distribution and thus reduces the risk of severe extrapolation errors in the early Gibbs iterations.

For the above two reasons, we recommend incorporating \deleted{parameter annealing}\added{the adaptive warm-up strategy} as a standard ingredient in practical implementations of DiG. \deleted{Guidelines for choosing the annealing schedules will be discussed in the next section.}

\begin{addedblock}
	\paragraph*{Computational and storage cost}
	With the adaptive warm-up strategy, one DiG sweep consists mainly of Gaussian updates of \(\uv_k\) for \(k\in\Hcal\) and reverse-diffusion updates of \(\sv_k\) for all components. Since \(\Hm_k^\top\Hm_k\) is shared across iterations, the Gaussian update can be accelerated by precomputing and caching a dense matrix factorization, such as an SVD or a Cholesky factorization of the Gaussian precision matrices. This gives a preprocessing cost \(O(md_k^2+d_k^3)\) and enables each subsequent Gaussian update to be performed in \(O(md_k+d_k^2)\). Let \(M_k^{(i)}\) be the number of discretization steps in the reverse-diffusion sampler for component \(k\) in sweep \(i\), and let \(F_k\) be the cost of one denoiser evaluation. Then DiG has a one-time preprocessing cost \(O\!\left(\sum_{k\in\Hcal}(md_k^2+d_k^3)\right)\), and one sweep costs
	\begin{equation*}
		O\!\left(
		\sum_{k\in\Hcal}(md_k+d_k^2) + \sum_{k=1}^K M_k^{(i)}F_k
		\right).
	\end{equation*}
	The storage is dominated by the component states, matrix factors, and diffusion-model parameters:
	\begin{equation*}
		O\!\left(
		\sum_{k=1}^K d_k
		+
		\sum_{k\in\Hcal} d_k^2
		+
		\sum_{k=1}^K P_k
		\right),
	\end{equation*}
	where \(P_k\) is the number of parameters of the \(k\)-th diffusion model. When \(K\), \(d_k\), \(F_k\), or \(P_k\) become very large, these computational and storage requirements can become practical limitations of the modular DiG implementation.
\end{addedblock}

By combining the developments in Sections~\ref{sec:prior-modeling} and~\ref{sec:DiG-sampling}, we obtain the proposed
signal component decomposition framework. Specifically, the framework
(i) provides a unified way to incorporate \deleted{both}\added{component-wise} model-driven and data-driven prior knowledge\deleted{ of each component} into diffusion priors, and
(ii) enables plug-and-play use of these diffusion priors within a Gibbs-type sampler.
The resulting diffusion-within-Gibbs (DiG) sampler draws exact samples from the posterior of a perturbed observation model, with a perturbation error governed by user-specified parameters.

\section{Theoretical Analysis and Discussion}
\label{sec:theory}

In this section, we establish the asymptotic consistency of the proposed DiG sampler\added{. We also discuss the practical implications of the perfect diffusion model assumption (Assumption~\ref{assump:diff}) required by the consistency theorem,} and compare DiG with a recently proposed class of diffusion-based posterior samplers built upon certain variable-splitting techniques.\deleted{ We further present some useful tips for the implementation of the DiG algorithm.}

\subsection{Consistency of the DiG Sampler}\label{subsec:consistency}

We first show that, when the diffusion models are trained perfectly, the DiG algorithm asymptotically produces samples from the exact posterior of the perturbed model \eqref{eq:observation-relax}. Specifically, \deleted{under}\added{we consider the DiG sampler equipped with the adaptive warm-up strategy described in Section~\ref{subsec:warm-up}. Under} Assumption~\ref{assump:diff} and \deleted{two additional conditions in Theorem~\ref{thm:consistency}---namely,
	Assumption~(a), which ensures irreducibility and aperiodicity of the induced Markov chain, and Assumption~(b), which ensures
	regularity of the annealing schedule---}\added{a positive prior-density condition, which ensures irreducibility and aperiodicity of the induced Markov chain,} we prove that the distribution of the DiG output converges to the posterior distribution
induced by the relaxed observation model~\eqref{eq:observation-relax}.

\begin{theorem}\label{thm:consistency}
	Under Assumptions~\ref{assump:noise}, \ref{assump:independence}, and \ref{assump:diff}, consider the DiG sampler (Algorithm~\ref{alg:DiG}) implemented with \deleted{parameter annealing}\added{the finite adaptive warm-up strategy (Section~\ref{subsec:warm-up})}.
	\begin{deletedblock}
		Assume further that:
		\begin{enumerate}[label=(\alph*)]
			\item For all \(1\leq k\leq K\), the distribution of \(\sv_k\) admits a density with respect to the Lebesgue measure on \(\R^{d_k}\) that is strictly positive everywhere.

			\item There exists an index \(N_0\) such that \(N_0 < N\) and, for all \(i\geq N_0\), the annealed parameters satisfy \(\sigma_v^{(i)}\equiv \sigma_v\) and, for all \(k\in\Hcal\), \(\eta_k^{(i)}\equiv \eta_k\).
		\end{enumerate}
	\end{deletedblock}
	\added{Suppose that each prior \(p_{\sv_k}\) has a strictly positive Lebesgue density on \(\R^{d_k}\).}
	\deleted{Let the initial samples $\sv^{(0)}_{1:K}$, $\uv^{(0)}_{\Hcal}$ of DiG be arbitrary random variables, and denote the resultant output random variables of DiG after $N$ iterations by \(\sv^{(N)}_{1:K}\), then we have}\added{Let \(\sv^{(N)}_{1:K}\) denote the DiG output after \(N\) iterations. Then}
	\begin{equation*}
		\lim_{N\to \infty} \TV\!\dparen{ p_{{\sv}^{(N)}_{1:K}} }{ p_{\sv_{1:K} \mid \yv'=\breve{\yv}} } = 0,
	\end{equation*}
	i.e., as \(N\) increases, the joint distribution of \(\sv^{(N)}_{1:K}\) converges in total variation distance to the posterior of \eqref{eq:observation-relax}.
\end{theorem}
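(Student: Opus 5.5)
The plan is to view the stationary-stage DiG iteration as a time-homogeneous Markov chain on the joint state $(\sv_{1:K},\uv_{\Hcal})$ and to show that this chain is Harris ergodic. The warm-up stage is then absorbed into the initial distribution.

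\textbf{Step 1: the stationary-stage sweep is an exact Gibbs sweep.} By Lemmas~\ref{lemma:cpdf-uk}, \ref{lemma:cpdf-sk} and \ref{lemma:denoise-diff}, each update in Algorithm~\ref{alg:DiG} with the target parameters $\sigma_v,\eta_k\le\sigma(T)$ samples exactly from the corresponding full conditional of $\pi\coloneq p_{\sv_{1:K},\uv_{\Hcal}\mid\yv'}(\cdot\mid\breve{\yv})$. Assumption~\ref{assump:diff} is what makes the diffusion draws exact. Denote by $P$ the resulting one-sweep kernel for $i>N_{\mathrm w}$. Each component update leaves $\pi$ invariant, so the systematic-scan composition $P$ does as well.

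\textbf{Step 2: $\pi$-irreducibility and aperiodicity.} The relaxed model gives an explicit joint density
\[
\pi\propto\prod_k p_{\sv_k}(\sv_k)\prod_{k\in\Hcal}\NormalPdf{\uv_k}{\sv_k}{\eta_k^2\Imat}\times e^{-\|\breve{\yv}-\sum_{k\notin\Hcal}\sv_k-\sum_{k\in\Hcal}\Hm_k\uv_k\|^2/(2\sigma_v^2)} .
\]
Every factor is strictly positive by hypothesis, so $\pi$ is strictly positive everywhere. Consequently every full conditional density is strictly positive on its whole space. Composing the sweep, $P(x,\cdot)$ then has a density $q(x,x')=\prod(\text{conditionals})>0$ for all $x,x'$. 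It follows that $P(x,A)>0$ whenever $\pi(A)>0$, which gives $\pi$-irreducibility, and also aperiodicity, since a cycle decomposition of a chain with $P(x,A)>0$ for all $x$ is impossible.

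\textbf{Step 3: Harris recurrence and convergence.} Invoke \cite{tierney1994}. A $\pi$-invariant, $\pi$-irreducible, aperiodic kernel satisfies $\|P^n(x,\cdot)-\pi\|_{\mathrm{TV}}\to0$ for $\pi$-a.e.\ $x$. To upgrade this to every $x$, which is needed because the post-warm-up state is arbitrary, note that $P(x,\cdot)\ll\pi$ for every $x$. Tierney's Corollary for Gibbs/Metropolis kernels absolutely continuous w.r.t.\ $\pi$ then gives Harris recurrence, so the convergence holds for all starting points. Let $\nu$ be the law of the state after warm-up; it is arbitrary, since the annealed kernels need not preserve $\pi$. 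Dominated convergence gives
\[
\|\nu P^{n}-\pi\|_{\mathrm{TV}}\le\int\|P^n(x,\cdot)-\pi\|_{\mathrm{TV}}\,\nu(dx)\to0 .
\]
Finally, marginalizing onto $\sv_{1:K}$ does not increase TV distance, and the $\sv$-marginal of $\pi$ is $p_{\sv_{1:K}\mid\yv'=\breve{\yv}}$. This yields the claim as $N=N_{\mathrm w}+n\to\infty$.

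\textbf{Main obstacle.} The delicate point is Step 2/3: rigorously justifying that the sweep kernel has an everywhere-positive density, and hence Harris recurrence, for arbitrary starting points. The difficulty is that the $\sv_k$-conditional in the identity case is defined through the diffusion transition, so it must be verified that it coincides everywhere, not just a.e., with the positive Bayes conditional. Lemma~\ref{lemma:denoise-diff} states the identity for every $\zv$, and I would lean on that, taking conditional densities defined via the explicit Bayes formula.
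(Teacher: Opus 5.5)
Your proposal is correct and follows essentially the same route as the paper. Both arguments identify each post-warm-up sweep as an exact Gibbs sweep on the relaxed joint posterior. Both use strict positivity of the priors to obtain an everywhere-positive transition density, which gives $\pi$-irreducibility and aperiodicity, then invoke Tierney's Theorem~1 and Corollary~1 and marginalize onto $\sv_{1:K}$. Your explicit handling of the arbitrary post-warm-up law (integrating $\|P^n(x,\cdot)-\pi\|$) and of Harris recurrence (via $P(x,\cdot)\ll\pi$) simply spells out what the paper's appeal to Corollary~1 uses implicitly.
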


\begin{IEEEproof}
	See Appendix~\ref{proof:consistency}.
\end{IEEEproof}

Theorem~\ref{thm:consistency} establishes the convergence of DiG to the relaxed posterior distribution. Since the relaxed observation model \eqref{eq:observation-relax} differs from the original model \eqref{eq:observation} only through the relaxation levels \(\eta_k\), the distribution \(p_{\sv^{(N)}_{1:K}}\) provides an accurate approximation to the true posterior \(p_{\sv_{1:K}\mid \yv=\breve{\yv}}\) when \(\eta_k\) is sufficiently small.

\added{One may notice that some structural priors, such as exact sparse or exact low-rank priors, may be singular with respect to the Lebesgue measure and thus do not satisfy the strictly positive density condition in Theorem~\ref{thm:consistency}. This condition is mainly used as a sufficient technical condition that allows standard convergence results to be applied to the DiG sampler. Although it does not apply to certain structural priors exactly, it is mild from an approximation viewpoint: in the quadratic Wasserstein space~\cite{villani_optimal_2009}, any finite-second-moment prior can be approximated arbitrarily well by distributions admitting strictly positive densities, e.g., via Gaussian convolution with vanishing variance. Theorem~\ref{thm:consistency} applies exactly to these positive-density approximating priors.}

In particular, when all sensing operators are identity matrices (one may directly reduce to this case via an appropriate change of variables; see Section~\ref{subsec:formulation}), no modeling mismatch is introduced in \eqref{eq:observation-relax}, and one obtains convergence to the true posterior directly.

\begin{corollary}\label{cor:consistency}
	Under the setting of Theorem~\ref{thm:consistency}, if \(\Hm_k=\Imat\) for all \(1\leq k\leq K\), then the DiG output \({\sv}^{(N)}_{1:K}\) satisfies
	\begin{equation*}
		\lim_{N\to \infty} \TV\!\dparen{ p_{{\sv}^{(N)}_{1:K}} }{ p_{\sv_{1:K} \mid \yv=\breve{\yv}} } = 0,
	\end{equation*}
	i.e., \({\sv}^{(N)}_{1:K}\) converges in distribution to the true posterior.
\end{corollary}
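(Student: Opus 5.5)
The plan is to specialize Theorem~\ref{thm:consistency} to the case $\Hcal=\emptyset$ and then check that the relaxed model coincides with the original one.

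\textbf{Step 1 (structure of the sampler).} If $\Hm_k=\Imat$ for every $k$, then by \eqref{eq:Hcal} we have $\Hcal=\emptyset$. Consequently:
\begin{itemize}
	\item no relaxation variables $\uv_k$ and no relaxation parameters $\eta_k$ are introduced;
	\item Algorithm~\ref{alg:DiG} reduces to updating $\breve{\sv}_k$ by sampling $p_{\bar{\xv}_{k,0}\mid\bar{\xv}_{k,\sigma^{-1}(\sigma_v)}}(\cdot\mid\breve{\rv}'_{\neg k})$, where $\breve{\rv}'_{\neg k}=\breve{\yv}-\sum_{j\neq k}\breve{\sv}_j$.
\end{itemize}
By Lemma~\ref{lemma:denoise-diff} and \eqref{eq:cond-Gibbs-alter-form}, each such update is exactly the Gibbs update of Algorithm~\ref{alg:Gibbs} for the original model. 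Lemma~\ref{lemma:denoise-diff} applies because $\sigma_v\le\sigma(T)$ under Assumption~\ref{assump:diff}. During warm-up we use $\sigma_v^{(i)}\in[\sigma_v,\sigma(T)]$, which is also within the range of that lemma.

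\textbf{Step 2 (the models coincide).} With $\Hcal=\emptyset$, the second sum in \eqref{eq:observation-relax} is empty. So $\yv'=\sum_k\sv_k+\vv$, which is identical to \eqref{eq:observation} with all $\Hm_k=\Imat$. The joint laws of $(\sv_{1:K},\yv')$ and $(\sv_{1:K},\yv)$ are therefore the same, and hence
\[
p_{\sv_{1:K}\mid\yv'}(\cdot\mid\breve{\yv})=p_{\sv_{1:K}\mid\yv}(\cdot\mid\breve{\yv})
\]
as densities. This holds for every $\breve{\yv}$, since both are defined by the same Bayes formula with a Gaussian likelihood that is positive everywhere. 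The normalizing constant is finite and positive because the likelihood is bounded and the priors are proper.

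\textbf{Step 3 (conclude).} The hypotheses of Theorem~\ref{thm:consistency} hold verbatim: Assumptions~\ref{assump:noise}, \ref{assump:independence} and \ref{assump:diff}, strictly positive prior densities, and a finite warm-up. The theorem gives
\[
\TV\!\dparen{p_{\sv^{(N)}_{1:K}}}{p_{\sv_{1:K}\mid\yv'=\breve{\yv}}}\to 0.
\]
Substituting the identity from Step 2 yields the claim.

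Convergence in total variation implies convergence in distribution, which justifies the final remark in the statement.

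The only point needing care is Step 2, namely the degenerate case $\Hcal=\emptyset$ within the theorem. I would check that the proof of Theorem~\ref{thm:consistency} does not rely on $\Hcal\neq\emptyset$. If that proof argues via irreducibility of the Gibbs chain on $(\sv_{1:K},\uv_{\Hcal})$, it covers the case where the $\uv$ block is empty, because the positivity of the conditionals $p_{\sv_k\mid\sv_k+\sigma_v\nv}$ alone gives irreducibility and aperiodicity.
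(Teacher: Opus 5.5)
Your proposal is correct and matches the paper's reasoning: with all $\Hm_k=\Imat$ you get $\Hcal=\emptyset$, so the relaxed model \eqref{eq:observation-relax} coincides with \eqref{eq:observation}, and Theorem~\ref{thm:consistency} applies directly. Your extra check that the theorem's irreducibility and aperiodicity argument still goes through when the $\uv$-block is empty is sound; the paper leaves this point implicit.
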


\begin{addedblock}
	\subsection{Discussion on Imperfect Diffusion Models} \label{subsec:discuss-imperfect-DM}
\end{addedblock}

Note that Theorem~\ref{thm:consistency} assumes perfectly trained diffusion models (Assumption~\ref{assump:diff}). Concretely, this idealized assumption entails:
\begin{enumerate}[label=\arabic*)]
	\item For all \(\eta>0\), the denoiser \(D^{(k)}_\theta(\cdot;\eta)\) in the diffusion model exactly matches the conditional expectation \(\CondExpt{\sv_k}{\sv_k+\eta\nv=\cdot}\). This effectively requires either
	      \begin{itemize}
		      \item infinitely many training samples as specified by Assumption~\ref{assump:prior}, a neural network structure expressive enough to represent the conditional expectation function, and training that attains a global optimum of the loss function; or
		      \item an exact analytic prior in Assumption~\ref{assump:prior}, so that \(D^{(k)}_\theta\) is a fully model-driven denoiser computed directly as the MMSE estimator.
	      \end{itemize}
	\item The integration step size \(h\) in the numerical solver \eqref{eq:diff-integral-solver} tends to zero.
\end{enumerate}
This idealization is unattainable in practice and must be approximated: the first requirement is better met by using more training data, more expressive architectures, improved optimization methods, or more accurate analytic priors and denoisers; the second is better met by increasing the number of numerical integration steps. Therefore, Theorem~\ref{thm:consistency} is not intended to directly predict the practical performance of DiG. Nevertheless, it conveys two key messages:
\begin{enumerate}
	\item The error introduced by the algorithmic design of DiG is controlled solely by the relaxation levels \(\eta_k\); when \(\eta_k\) is sufficiently small, this design-induced error becomes negligible.
	\item If the empirical performance of DiG falls short of expectations, then (under the modeling assumptions) the only remaining source of error is the diffusion model itself, which helps pinpoint where debugging and performance improvements should focus.
\end{enumerate}

\begin{addedblock}
	One may also be interested in the robustness of the DiG sampler to
	imperfect diffusion models. More precisely, approximation errors in the
	diffusion models perturb the transition kernel of the Gibbs sampler, and one may ask whether this
	perturbation can lead to uncontrolled degradation of the decomposition
	performance of DiG. Existing perturbation analyses for approximate
	Markov chains provide possible theoretical routes
	\cite{mitrophanov_stability_2003,rudolf_perturbation_2018}, but applying
	them to concrete diffusion-based samplers typically requires ergodicity
	assumptions or Lyapunov-type conditions that are difficult to verify.
	Moreover, the resulting error bounds often involve constants that are hard
	to estimate or interpret quantitatively. We therefore examine this
	robustness issue empirically rather than claiming a general theoretical
	guarantee; see Section~\ref{subsec:heartbeat-extraction}.
\end{addedblock}

\subsection{Comparison with Other Diffusion-Based Samplers}\label{subsec:compare-samplers}

Recently, a class of diffusion-based posterior samplers \cite{coeurdoux_pnp_2024,xu2024,wu_principled_2024} built upon certain variable-splitting techniques has been proposed for solving general Bayesian inverse problems. These methods are also asymptotically consistent. However, when applied to component decomposition, they typically \deleted{treat all signal components jointly as a single unknown}\added{perform posterior sampling over all signal components jointly
	as a single aggregate variable, rather than explicitly exploiting the
	conditional decomposition structure in \eqref{eq:observation}}, and thus cannot fully exploit the structural property of the component decomposition problem\deleted{ in \eqref{eq:observation}}.

We briefly outline how variable splitting can be applied to signal decomposition. Define the stacked variable
\begin{align*}
	\sv & \coloneq \begin{bmatrix}
		               \sv_1^\top & \sv_2^\top & \cdots & \sv_K^\top
	               \end{bmatrix}^\top, \\
	\Hm & \coloneq \begin{bmatrix}
		               \Hm_1 & \Hm_2 & \cdots & \Hm_K
	               \end{bmatrix},
\end{align*}
so that \eqref{eq:observation} is rewritten as a linear inverse problem
\(\yv = \Hm \sv + \vv\).
By Bayes' rule, the posterior density factorizes as
\begin{align*}
	p_{\sv\mid \yv}\cparen{\breve{\sv}}{\breve{\yv}}
	 & \propto  p_{\yv\mid\sv}\cparen{\breve{\yv}}{\breve{\sv}} \times p_{\sv}\paren{\breve{\sv}}                           \\
	 & \propto \exp\!\paren{-\inv{2\sigma_v^2}\big\|\breve{\yv}-\Hm\breve{\sv}\big\|^2_2}\times p_{\sv}\paren{\breve{\sv}}.
\end{align*}
Some plug-and-play (PnP) diffusion samplers \cite{bouman_generative_2023,xu2024} split the sampling of the above product distribution into \emph{proximal} sampling \cite{lee_structured_2021} steps\footnote{
	For a distribution \(p\), the \(\eta\)-proximal sampler at \(\xv\) draws a sample from a density proportional to
	\(p(\cdot)\exp\!\big(-\| \cdot-\xv\|_2^2/(2\eta^2)\big)\).
} of the two factors therein. Concretely, their iterations take the form
\begin{align*}
	\text{Draw}~\sv^{(i+\hf)} & \sim C \exp\!\paren{
		-\frac{\big\|\breve{\yv}-\Hm(\cdot)\big\|^2_2}{2\sigma_v^2}
	-\frac{\big\|(\cdot)-\sv^{(i)}\big\|^2_2}{2\eta^2}},                    \\
	\text{Draw}~\sv^{(i+1)}   & \sim C'\, p_{\sv}(\cdot)\times\exp\!\paren{
		-\inv{2\eta^2}\big\|(\cdot)-\sv^{(i+\hf)}\big\|^2_2 },
\end{align*}
where \(C\) and \(C'\) are normalization constants and \(\eta>0\) is chosen sufficiently small. In the above iteration, \(\sv^{(i+\hf)}\) is drawn by sampling from a Gaussian distribution, whereas \(\sv^{(i+1)}\) is drawn using a diffusion model.

It has been pointed out \cite{wu_principled_2024} that the aforementioned proximal sampling scheme is equivalent to Gibbs sampling on the following joint distribution of \((\sv,\uv)\):
\begin{equation*}
	p_{\sv,\uv}(\breve{\sv},\breve{\uv})
	\propto
	p_{\sv}\paren{\breve{\sv}}
	\times
	\exp\!\paren{
		-\frac{\big\|\breve{\yv}-\Hm\breve{\uv}\big\|^2_2}{2\sigma_v^2}
		-\frac{\big\|\breve{\uv}-\breve{\sv}\big\|^2_2}{2\eta^2} },
\end{equation*}
with $\paren{\sv^{(i)},\sv^{(i+\hf)}}$ being the $i$-th generated sample for $(\sv,\uv)$. One can further verify that this joint distribution is precisely the posterior $p_{\sv,\uv|\yv''}\cparen{\cdot}{\breve{\yv}}$ associated with the following relaxed observation model:
\begin{equation}
	\yv'' \coloneq \sum_{k=1}^K \Hm_k \paren{ \sv_k + \vv_k } + \vv,
\end{equation}
where \(\uv_k\coloneq \sv_k+\vv_k\) and \(\vv_k\sim\Normal{\zerov}{\eta^2\Imat}\). This observation reveals two key differences between DiG and the above proximal-sampling approach, which are confirmed by the numerical results in Section~\ref{sec:experiments}:
\begin{enumerate}
	\item Each iteration of the proximal-sampling method updates the entire stacked variable \(\sv\equiv \sv_{1:K}\) simultaneously, while DiG exploits component independence and updates the components sequentially within one sweep. Empirically, this often yields faster convergence for DiG.

	\item DiG introduces modeling mismatch only for components with non-identity sensing operators, whereas the proximal-sampling method injects the same mismatch level into all components. When some \(\Hm_k\) are identity matrices, this leads to unnecessary mismatch and typically degrades sampling accuracy compared with DiG.
\end{enumerate}

Finally, we note that the relaxed posterior targeted by DiG is also derivable from a more general variable-splitting framework introduced in \cite{vono_split-and-augmented_2019}, provided that one suitably reformulates $p_{\sv_{1:K}|\yv}$ and introduces appropriate relaxation factors. However, the underlying derivations differ: the variable splitting in \cite{vono_split-and-augmented_2019} is motivated by approximating the target posterior density function, whereas our relaxation is obtained by explicitly constructing a new probabilistic model. Moreover, though the variable-splitting scheme in \cite{vono_split-and-augmented_2019} can yield the same relaxed posterior as in this paper, the split Gibbs method in \cite{vono_split-and-augmented_2019} still updates all components synchronously at each iteration, and therefore does not fully leverage the component-wise independence.

\begin{deletedblock}
	\subsection{Implementation Details}
	\label{sec:implementation}

	Although DiG comes with asymptotic consistency guarantees, current theory provides limited insight into its mixing rate. In practice, the choice of several algorithmic parameters often has a pronounced impact on mixing time. While we still lack a complete theoretical understanding of how these parameters affect mixing properties, a few empirical heuristics have shown strong performance in our experiments. We summarize them below.

	\paragraph*{1) Choice of initial samples} DiG typically converges faster when the initial point \(\sv^{(0)}_{1:K}\) lies in a high-density region of the posterior. In practice, if estimates of the prior mean \(\xiv_k\approx \Expt{\sv_k}\) and covariance \(\Gammam_k\approx \Cov\paren{\sv_k}\) of each component are available—either from an analytic prior or from data—then one may approximate \(p_{\sv_k}\) by a Gaussian distribution \(\Normal{\xiv_k}{\Gammam_k}\), and use the resulting linear-Gaussian model to form an approximate posterior mean:
	\begin{equation*}
		\CondExpt{\sv_k}{\yv}\approx \xiv_k + \Gammam_k \Hm_k^\top {\Gammam_{\yv}}^{-1}
		\paren{ \yv - \sum_{k=1}^K \Hm_k \xiv_k },
	\end{equation*}
	where \(\Gammam_{\yv}\coloneq \paren{ \sum_{k=1}^K \Hm_k\Gammam_k\Hm_k^\top + \sigma_v^2\Imat }\).
	We recommend using this approximate posterior mean as the initialization \(\sv^{(0)}_k\).

	\paragraph*{2) Annealing schedule design} In the annealing strategy, the decay profiles of \(\paren{\sigma_v^{(i)}}_{i=1}^{N}\) and \(\paren{\eta_k^{(i)}}_{i=1}^{N}\) toward \(\sigma_v\) and \(\eta_k\), respectively, constitute an important user-controlled design choice. We recommend adopting a cosine-shaped schedule inspired by the noise schedules commonly used in diffusion models \cite{nichol_improved_2021}, i.e.,
	\begin{align*}
		\sigma_v^{(i)} & \coloneq \sigma_{\min}
		+ \paren{\sigma_{\max} - \sigma_{\min}}\times
		{ \cos^2 \!\paren{ \frac{i/N+\epsilon}{1+\epsilon}\cdot \frac{\pi}{2} } }, \\
		\eta_k^{(i)}   & \coloneq \eta_{\min}^{(k)}
		+ \paren{\eta_{\max}^{(k)} - \eta_{\min}^{(k)}}\times
		{ \cos^2 \!\paren{ \frac{i/N+\epsilon}{1+\epsilon}\cdot \frac{\pi}{2} } },
	\end{align*}
	where $\epsilon$ is a small constant. This schedule is relatively flat during the early iterations (where initialization effects are strong) and the late iterations (which largely determine final sample quality), while decaying more rapidly—approximately linearly—during the middle phase.

\end{deletedblock}

\section{Experiments}
\label{sec:experiments}

This section evaluates the proposed component decomposition framework through \deleted{two numerical experiments}\added{two experimental settings designed to examine different aspects of the proposed framework}.

In the first experiment (an illustrative multi-component image decomposition example), we benchmark DiG against the
state-of-the-art variable-splitting-based diffusion sampler DPnP \cite{xu2024}.
\added{We also investigate how practical implementation choices, including the adaptive warm-up strategy and initialization, affect the finite-iteration behavior of DiG.}

In the second experiment (heartbeat extraction task under strong motion interference), we compare DiG with a broader set of baselines\deleted{, including two classical decomposition methods (EMD \cite{huang1998} and VMD \cite{dragomiretskiy2014}) and two diffusion-based samplers	(DPnP \cite{xu2024} and MSDM \cite{mariani2024}).}
\added{. These baselines include three unsupervised methods: EMD \cite{huang1998}, VMD \cite{dragomiretskiy2014}, and a Gaussian-process-based (GP) Bayesian decomposition method \cite{Duvenaud2011}. They also include three state-of-the-art diffusion-based posterior samplers: MSDM \cite{mariani2024}, DPnP \cite{xu2024}, and DAPS \cite{zhang2025cvpr}.}
\deleted{Notably, in the second experiment, when combined with appropriate model-based priors, DiG achieves better decomposition quality than these competing methods, while requiring substantially less training data.}
\added{In this task, we further examine three practical aspects of DiG: its performance degradation when the learned diffusion prior becomes less accurate under limited training data; the effectiveness of hybrid prior modeling (Section~\ref{subsec:train_diff_model}), which combines learned and model-based component priors; and uncertainty quantification based on posterior samples generated by DiG.}

\subsection{Image Decomposition with Corrected Components}\label{subsec:image-decomposition}

We first evaluate the effectiveness of the DiG algorithm in a highly underdetermined image decomposition problem. The observation model is
\begin{equation}\label{eq:image-decomposition}
	\yv = \sv_1 + \Hm_2 \sv_2 + \Hm_3 \sv_3 + \Hm_4 \sv_4 + \vv,
\end{equation}
where $\sv_1\in\R^{28\times 28}$ is an image of clothes drawn from the Fashion-MNIST dataset \cite{xiao2017fashion}, $\sv_2\in\R^{28\times 28}$ and $\sv_3\in\R^{28\times 28}$ are images of single handwritten digits drawn from the MNIST dataset \cite{dengmnist} with labels (values) satisfying
\begin{equation}\label{eq:label-constraint}
	\text{label}(\sv_2) + 1 \equiv \text{label}(\sv_3) \pmod{10},
\end{equation}
and $\sv_4\in\R^{28\times 28}$ is an image of a single English letter drawn from the EMNIST dataset \cite{cohen2017emnist}. All component images are converted to grayscale, and are normalized to the range $[0,1]$. The linear operator $\Hm_2$ applies a $90^\circ$ clockwise rotation and then applies a sign flip, whilst $\Hm_3$ and $\Hm_4$ respectively represent an embossing transform and a Gaussian blur operator. The standard deviation of the observation noise $\vv$ is \deleted{$\sigma_v = 0.05$}\added{$\sigma_v=0.25$}.

Notice that \eqref{eq:label-constraint} enforces certain dependence relation between $\sv_2$ and $\sv_3$. Thus we define a new active component $\sv'\coloneq \begin{bmatrix}
		\sv^\top_2 & \sv_3^\top
	\end{bmatrix}^{\top}$ and sample from $p_{\sv_1,\sv',\sv_4|\yv}$ instead.

\begin{table}[t]
	\centering
	\caption{Relative squared error of all algorithms. \textbf{Bold}: best. }
	\begin{tabular}{lcccc}
		\hline
		\textbf{Method} & ${\text{RSE}(\sv_1)}$ & ${\text{RSE}(\sv_2)}$ & ${\text{RSE}(\sv_3)}$ & ${\text{RSE}(\sv_4)}$ \\
		\hline
		DiG (w/ c)      & \textbf{0.095}        & \textbf{0.192}        & \textbf{0.043}        & \textbf{0.118}        \\
		DPnP (w/ c)     & 0.114                 & 0.215                 & 0.046                 & 0.123                 \\
		DiG (w/o c)     & 0.101                 & 0.279                 & 0.048                 & 0.146                 \\
		DPnP (w/o c)    & 0.115                 & 0.281                 & 0.047                 & 0.133                 \\
		\hline
	\end{tabular}
	\label{tab:relative_error}
\end{table}

\begin{figure}[t]
	\centering
	\includegraphics[width=.95\linewidth]{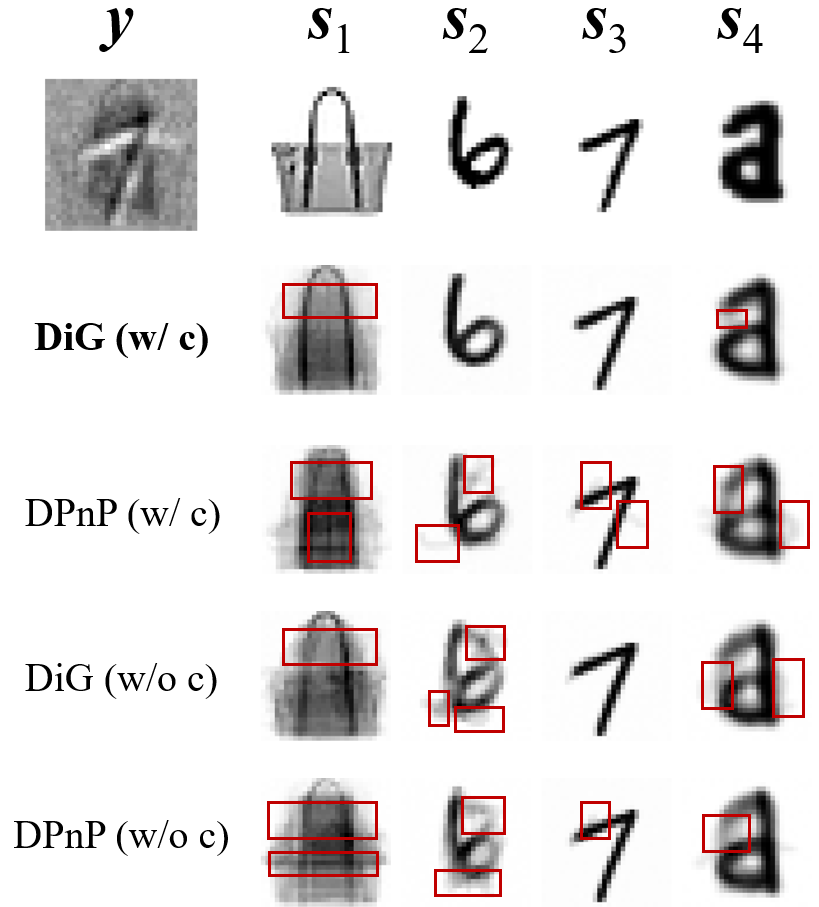}
	\caption{Estimates given by all algorithms in a representative problem instance. First row: the observation and ground-truth of all component signals. Other rows: estimates given by each algorithm. The red boxes highlight representative artifact regions in the estimates produced by each method.}
	\label{fig:mnist_mean}
\end{figure}

\paragraph*{1) Algorithms for comparison}  We compare DiG (Algorithm~\ref{alg:DiG}) against \deleted{a state-of-the-art diffusion-based sampler termed DPnP \cite{xu2024}, where the latter treats all the components as a single aggregate unknown}\added{DPnP \cite{xu2024}, a state-of-the-art variable-splitting-based diffusion posterior sampler} (cf. Section~\ref{subsec:compare-samplers}). \added{Unlike DiG, DPnP performs variable splitting on the stacked component vector, thereby injecting the same relaxation mismatch into all components, including those with identity sensing operators, and updates the components jointly rather than component-wise.} To assess the role of modeling the correlation between $\sv_2$ and $\sv_3$, we include \deleted{ablation studies where}\added{comparison variants in which} DiG and DPnP treat $\sv_2$ and $\sv_3$ as if they are independent.\footnote{Specifically, when the correlation between $\sv_2$ and $\sv_3$ is considered, DiG and DPnP employ a single denoiser $D'_\theta(\breve{\sv}_2,\breve{\sv}_3;\eta)$ to jointly denoise $\sv_2$ and $\sv_3$ in the diffusion model for sampling $\sv'$. When $\sv_2$ and $\sv_3$ are treated as independent, two separate denoisers $D_\theta^{(2)}(\breve{\sv}_2;\eta)$ and $D_\theta^{(3)}(\breve{\sv}_3;\eta)$ are used respectively for $\sv_2$ and $\sv_3$ in the diffusion model. \label{fm:correlation-treatment}} We denote an algorithm (e.g., DiG) that takes correlation into account as, e.g., DiG (w/ c), and denote the one that neglects correlation as, e.g., DiG (w/o c).

\paragraph*{\added{2) Implementation settings}}

\begin{deletedblock}
	For both DiG and DPnP, we set the initial sample as suggested in Section~\ref{sec:implementation} with $\xiv_k\coloneq \zerov$ and $\Gammam_k\coloneq 0.04 \Imat$ for all $k$. We set the parameter annealing schedule as suggested in Section~\ref{sec:implementation} with $\sigma_{\max}\coloneq 20\sigma_v \eqcolon \eta^{(k)}_{\max}$ ($k=2,3,4$) and $\sigma_{\min}\coloneq 5\sigma_v \eqcolon \eta^{(k)}_{\min}$ ($k=2,3,4$). The number of iterations is fixed to $N=100$. In each iteration, to sample from $p_{\bar{\xv}_{k,0}|\bar{\xv}_{k,\sigma^{-1}(\eta)}}$, we run the corresponding reverse diffusion from
	\(t=\sigma^{-1}(\eta)\) to \(t=0\) using 100 discretization (numerical integration) steps. For each component signal, we draw $25$ posterior samples and compute an estimate of the component by taking the average of posterior samples whose data misfit is lower than the average misfit value.
\end{deletedblock}
\begin{addedblock}
	For both DiG and DPnP, we initialize the component estimates using a simple linear-Gaussian surrogate model. Specifically, we approximate each component prior by a Gaussian distribution with mean $\xiv_k\coloneq \zerov$ and covariance $\Gammam_k\coloneq 0.04\Imat$, compute the corresponding closed-form MMSE estimate under this surrogate model, and use it as the initial sample. We run an adaptive warm-up stage for the first $N_{\mathrm{w}}=150$ iterations. During warm-up, the inflated observation-noise level and mismatch levels are decreased according to
	\begin{align*}
		\sigma_v^{(i)}
		 & \coloneq \sigma_{v,\min}
		+ \paren{\sigma_{v,\max}-\sigma_{v,\min}}
		\cos^2\!\paren{\frac{i/N_{\mathrm{w}}+\epsilon}{1+\epsilon}\cdot\frac{\pi}{2}}, \\
		\eta_k^{(i)}
		 & \coloneq \eta_{\min}^{(k)}
		+ \paren{\eta_{\max}^{(k)}-\eta_{\min}^{(k)}}
		\cos^2\!\paren{\frac{i/N_{\mathrm{w}}+\epsilon}{1+\epsilon}\cdot\frac{\pi}{2}},
	\end{align*}
	where $\sigma_{v,\max}=20\sigma_v$ and $\sigma_{v,\min}=\sigma_v$. For the mismatch levels, we use $\eta_{\max}^{(k)}=20\sigma_v$ and $\eta_{\min}^{(k)}=\sigma_v$ for $k=2,3,4$ in DiG, and for $k=1,\dots,4$ in DPnP. After warm-up, the remaining iterations use the fixed target parameters $\sigma_v^{(i)}=\sigma_v$ and $\eta_k^{(i)}=\sigma_v$, with the same component-index convention. The number of iterations is fixed to $N=500$.

	In each iteration, to sample from $p_{\bar{\xv}_{k,0}|\bar{\xv}_{k,\sigma^{-1}(\eta)}}$, we run the corresponding reverse diffusion from
	\(t=\sigma^{-1}(\eta)\) to \(t=0\) using 100 discretization (numerical integration) steps. For each component signal, we draw $25$ posterior samples and compute an estimate of the component by taking the average of posterior samples whose data misfit is lower than the average misfit value.
\end{addedblock}

\deletedparagraph{2) Diffusion models} \added{The diffusion-model training settings used in this experiment are as follows.} We adopt standard U-net-style denoisers in the diffusion models for sampling $\sv_1$, $\sv'\coloneq \begin{bmatrix}
		\sv^\top_2 & \sv_3^\top
	\end{bmatrix}^{\top}$ and $\sv_4$, as suggested in \cite{song2021}. We set the noise schedule $g(t)$ in the diffusion model (cf. Assumption~\ref{assump:diff}) as $g(t)\coloneq
	\alpha^t$ ($t\in[0,T]$) with $\alpha=15$ and $T=1$. We employ the same denoising score-matching loss as \cite{song2021}, and adopt an exponentially decaying learning rate initialized at $8\times 10^{-5}$. The number of training samples is 100 K for training the denoiser of $\sv'$, and 60 K for other denoisers. Each denoiser is trained for $500$ epochs. The reverse diffusion process \eqref{eq:reverse-SDE-sk} for generating samples is computed numerically using a first-order SDE integrator \cite{song2021}.

\paragraph*{3) \deleted{Results}\added{Main results}} For evaluation, we generate 200 synthetic observations $\yv$ in \eqref{eq:image-decomposition} using data from the test dataset of Fashion-MNIST, MNIST and EMNIST. For each algorithm, we quantify its reconstruction accuracy by the relative squared error (RSE)
\begin{equation}\label{eq:error_signal}
	\text{RSE}(\sv_k)
	= \frac{\Expt{\|\hat{\sv}_k - \sv_k\|_2^2}}{\Expt{\|\sv_k\|_2^2}},
	\qquad k = 1,\dots,4,
\end{equation}
where $\hat{\sv}_k$ is the estimate of $\sv_k$ given by the algorithm, and expectation is taken over all 200 problem instances. The RSE of all methods are reported in Table~\ref{tab:relative_error}. Fig.~\ref{fig:mnist_mean} shows the estimates given by all algorithms in a representative problem instance. Table~\ref{tab:relative_error} together with Fig.~\ref{fig:mnist_mean} indicate that DiG (w/ c) achieves the most accurate estimation for all component signals, and that exploiting the correlation between $\sv_2$ and $\sv_3$ with a joint diffusion model (or say, joint denoiser) leads to a noticeable performance improvement.

\begin{addedblock}
	Table~\ref{tab:image-runtime} reports the wall-clock inference time of all methods in this experiment, averaged over 50 test instances. The four columns report the total time per test instance (including the matrix-factorization precomputation required by the Gaussian updates), the time spent in Gaussian updates, the time spent in reverse-diffusion sampling, and the average time per iteration, respectively. From Table~\ref{tab:image-runtime}, DiG and DPnP have comparable total inference time under the same posterior-sample budget. For both methods, reverse-diffusion sampling dominates the total cost, whereas Gaussian updates account for only a small fraction. Modeling the correlation between $\sv_2$ and $\sv_3$ further reduces the runtime in both DiG and DPnP, because the correlated implementation uses one joint denoiser for $(\sv_2,\sv_3)$ instead of two separate denoisers.
\end{addedblock}

\begin{table}[t]
	\centering
	\caption{\added{Wall-clock inference time in the image decomposition experiment. All entries are measured in seconds.}}
	\begin{tabular}{lcccc}
		\toprule
		\textbf{Method} & \textbf{Total/instance} & \textbf{Gaussian} & \textbf{Diffusion} & \textbf{Avg./iter.} \\
		\midrule
		DiG (w/ c)      & 190.450                 & 1.201             & 185.660            & 0.952               \\
		DPnP (w/ c)     & 188.709                 & 1.640             & 184.737            & 0.944               \\
		DiG (w/o c)     & 253.417                 & 1.265             & 239.621            & 1.267               \\
		DPnP (w/o c)    & 253.816                 & 1.619             & 248.881            & 1.269               \\
		\bottomrule
	\end{tabular}
	\label{tab:image-runtime}
\end{table}

\begin{addedblock}

	\paragraph*{4) Ablation studies}

	We further investigate how two practical implementation choices, namely the adaptive warm-up strategy and the initialization strategy, affect the finite-iteration behavior of DiG. We also compare DiG with DPnP under the same implementation choices. All ablation studies use the correlation-aware variants of DiG and DPnP. In each ablation study, we use the same experimental settings as before, and vary only the implementation choice being ablated. We report the mean component RSE, i.e., the average of \(\text{RSE}(\sv_k)\) over the four components, as a function of the iteration number.

	Fig.~\ref{fig:image-warmup-ablation} studies how the presence of the adaptive warm-up strategy and the specific warm-up schedule affect the finite-iteration behavior of DiG and DPnP. We compare three choices: no warm-up, the cosine warm-up schedule used above, and a linear warm-up schedule. The results show that, under each warm-up implementation, DiG consistently achieves a faster decrease and a lower final value in mean component RSE than DPnP. For both DiG and DPnP, using warm-up leads to substantially faster error reduction than using no warm-up, demonstrating the practical effectiveness of the warm-up strategy. Among the two warm-up schedules, the cosine and linear designs lead to similar finite-iteration behavior, indicating that the presence of warm-up is more important than the specific schedule shape in this experiment.
\end{addedblock}

\begin{figure}[t]
	\centering
	\includegraphics[width=\linewidth]{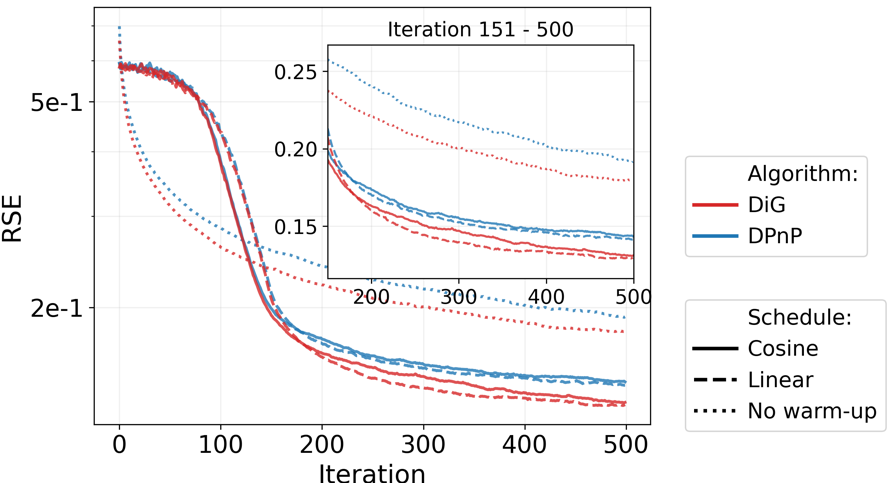}
	\caption{\added{Warm-up schedule ablation in the image decomposition experiment. Colors distinguish DiG and DPnP, while line styles distinguish different warm-up implementations. For the cosine and linear warm-up variants, the warm-up stage lasts for $N_{\mathrm{w}}=150$ iterations.}}
	\label{fig:image-warmup-ablation}
\end{figure}

\begin{addedblock}
	Fig.~\ref{fig:image-initialization-ablation} studies how initialization affects the finite-iteration behavior of DiG and DPnP, while fixing the warm-up schedule to the cosine schedule. We compare the Gaussian-surrogate MMSE initialization described above with two random alternatives, denoted by Random 1x and Random 5x, which initialize each component by a zero-mean Gaussian vector with one and five times the component's true variance, respectively. Across all initialization strategies, DiG consistently decreases faster in the stationary sampling stage and reaches a lower final mean component RSE than DPnP. Although different initialization strategies produce different initial RSE values, these differences are quickly reduced under the adaptive warm-up strategy, and the curves behave similarly in the stationary sampling stage. Overall, the initialization strategy has only a limited effect on the final performance in this experiment.
\end{addedblock}

\begin{figure}[t]
	\centering
	\includegraphics[width=\linewidth]{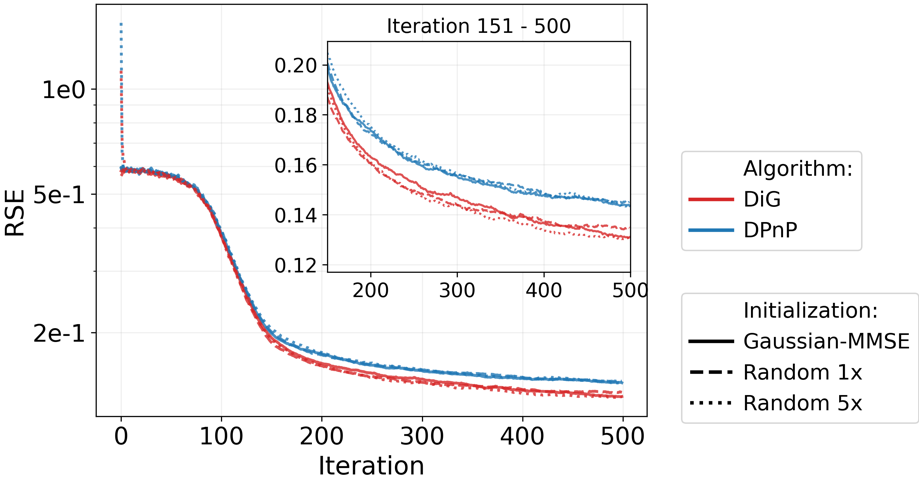}
	\caption{\added{Initialization ablation in the image decomposition experiment under cosine warm-up. Colors distinguish DiG and DPnP, while line styles distinguish different initialization strategies. The inset zooms in on the stationary sampling stage after the first $N_{\mathrm{w}}=150$ warm-up iterations.}}
	\label{fig:image-initialization-ablation}
\end{figure}

\subsection{Heartbeat Extraction from Strong Motion Interference}\label{subsec:heartbeat-extraction}

Next, we consider a more realistic task of extracting heartbeat signals from strong interference caused by body motion, which is inspired by challenges encountered in radar-based contactless heartbeat monitoring \cite{zhang2023overview}. We evaluate the effectiveness of both the DiG sampler (Algorithm~\ref{alg:DiG}) and the hybrid prior modeling techniques (Section~\ref{subsec:train_diff_model}) on this task.

The observation model is
\begin{equation}\label{eq:heartbeat-extraction}
	\yv = \sv_1 + \sv_2 + \vv,
\end{equation}
where $\sv_1\in\R^{1000}$ is the waveform of a heartbeat signal in a fixed time period, $\sv_2\in\R^{1000}$ is an interference signal caused by body motion which is independent from $\sv_1$. The resultant component decomposition problem seems simple at first sight because only two components are involved, and all sensing matrices are identity matrices. However, in radar-based heartbeat monitoring, the amplitude of the heartbeat signal $\sv_1$ is typically much weaker than the interference $\sv_2$, and is comparable to the noise power, which poses a crucial challenge in the extraction of $\sv_1$.

\deletedparagraph{1) Datasets} We use the impedance dataset \cite{schellenberger2020dataset} to generate heartbeat signals for training diffusion models and evaluating all algorithms.
The impedance dataset contains \added{measured} recordings of heartbeat signals from 30 subjects, where impedance (heartbeat) signals from 25 subjects are used for training, and those from the remaining 5 subjects are held out for testing.
The heartbeat signals are bandpass-filtered and segmented into 10-second clips, with the length of each clip being 1000 (sampling interval 0.01 second).
The motion interference components are synthetically generated by integrating 10-second velocity profiles with randomized piecewise-constant amplitudes and smooth sigmoidal transitions\added{, which provides ground-truth component signals for controlled quantitative evaluation}.

\begin{table}[!t]
	\centering
	\setlength{\tabcolsep}{2pt}
	\renewcommand{\arraystretch}{1}
	\caption{\deleted{Relative squared error of recovered heartbeat under different combinations of SIR and SNR in dB. \textbf{Bold}: best.}\added{Relative squared error of recovered heartbeat under different SIR--SNR settings in dB. \textbf{Bold}: best.}}
	\vspace{2mm}
	\resizebox{\columnwidth}{!}{%
		\begin{tabular}{c|ccccccc}
			\hline
			(SIR, SNR)    & EMD       & VMD     & GP     & MSDM           & DPnP  & DAPS  & DiG            \\
			\hline
			(-20.1, 13.2) & 102.671   & 0.116   & 3.595  & \textbf{0.039} & 0.076 & 0.053 & 0.040          \\
			(-20.1, -0.8) & 103.113   & 0.637   & 0.860  & 0.711          & 0.142 & 0.103 & \textbf{0.088} \\
			(-20.1, -6.8) & 104.410   & 2.024   & 0.403  & 3.500          & 0.256 & 0.233 & \textbf{0.147} \\
			(-26.1, 13.2) & 410.347   & 1.231   & 1.443  & 0.171          & 0.141 & 0.107 & \textbf{0.075} \\
			(-26.1, -0.8) & 410.739   & 0.906   & 0.663  & 0.875          & 0.183 & 0.124 & \textbf{0.110} \\
			(-26.1, -6.8) & 412.155   & 1.974   & 0.757  & 3.750          & 0.307 & 0.244 & \textbf{0.172} \\
			(-40.1, 13.2) & 9872.134  & 815.401 & 14.062 & 39.995         & 0.861 & 2.168 & \textbf{0.451} \\
			(-40.1, -0.8) & 10073.719 & 601.728 & 2.509  & 40.961         & 0.705 & 1.109 & \textbf{0.393} \\
			(-40.1, -6.8) & 10076.855 & 122.723 & 4.509  & 44.822         & 0.746 & 0.741 & \textbf{0.481} \\
			\hline
		\end{tabular}}
	\label{tab2}
\end{table}

\paragraph*{\deleted{2) }\added{1) }Algorithms for comparison}

We compare the proposed DiG algorithm with \deleted{four methods:}\added{three unsupervised baselines and three learned diffusion posterior-sampling baselines.}

\added{The unsupervised baselines are} empirical mode decomposition (EMD \cite{huang1998}), variational mode decomposition (VMD \cite{dragomiretskiy2014}), \added{and a Gaussian-process-based (GP) Bayesian decomposition method \cite{Duvenaud2011}.} EMD and VMD are respectively transform-based and optimization-based model-driven methods that decompose an observation as sum of multiple \deleted{periodic components}\added{oscillatory modes}. Both methods are \added{computationally efficient, require no training data, and are} widely used in radar-based vital sign monitoring \cite{zhang2023overview}. To reconstruct the heartbeat signal $\sv_1$ from the \deleted{periodic components given by}\added{modes extracted by} EMD/VMD, we select a subset of components whose sum best approximates the true value of $\sv_1$ in $\ell_2$-distance.\footnote{This oracle selection assumes access to the true heartbeat signal; in practical scenarios without ground truth, the performance of EMD/VMD is expected to degrade.} \added{The GP baseline uses more problem-specific Bayesian modeling assumptions: it models the observation as the sum of a periodic Gaussian-process component and a smooth Gaussian-process component, and performs decomposition by solving a nonconvex optimization problem over the kernel hyperparameters.}

\added{The learned diffusion posterior-sampling baselines are} MSDM \cite{mariani2024}, \deleted{and} DPnP \cite{xu2024}\added{, and DAPS \cite{zhang2025cvpr}}. \deleted{MSDM and DPnP are state-of-the-art posterior sampling methods using plug-and-play diffusion priors.}\added{MSDM is a decomposition method specialized to identity component sensing operators, which generates approximate posterior samples by perturbing the reverse diffusion process of a learned source prior with a likelihood-gradient term. DPnP is a variable-splitting-based sampler that introduces relaxation error even for components with identity sensing operators; see Section~\ref{subsec:compare-samplers}. DAPS alternates reverse-diffusion prior denoising with likelihood-informed Langevin correction along a decoupled noise-annealing schedule; unlike DiG, the reverse-diffusion step is not derived from a component-wise Gibbs conditional distribution. From a theoretical perspective, MSDM and DAPS do not provide any consistency guarantee even under the perfect diffusion model assumption, whereas DPnP is consistent only for a relaxed posterior and therefore lacks DiG's direct consistency guarantee to the true posterior in this identity-operator setting (Corollary~\ref{cor:consistency}).}

\begin{figure}[!t]
	\centering

	\includegraphics[width=.95\linewidth]{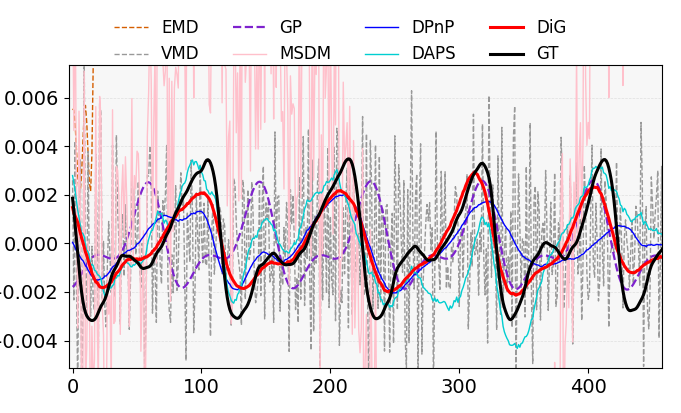}

	\caption{\deleted{Estimates of heartbeat signals given by all algorithms in representative problem instances with different SIR--SNR settings. GT: ground-truth. }\added{Representative heartbeat recovery results under the most challenging setting, $(\mathrm{SIR},\mathrm{SNR})=(-40.1,-6.8)$ dB. GT: ground truth.}}
	\label{fig1}
\end{figure}

\paragraph*{\added{2) Implementation settings}} For MSDM, we set all hyperparameters as suggested in \cite{mariani2024}. \deleted{For DPnP and DiG, we set the initial samples as $\sv^{(0)}_1\coloneq \zerov$ and $\sv^{(0)}_2\coloneq \yv$,}\added{For DPnP and DiG, we initialize the heartbeat component as zero and initialize the motion-interference component by smoothing the observation. For DAPS, we use the same deterministic base initialization and then add the initial annealing noise prescribed by the DAPS algorithm.} \deleted{set the parameter annealing schedule as suggested in Section~\ref{sec:implementation}}\added{Regarding the warm-up strategy, for DiG and DPnP, we use cosine-shaped annealing schedules} with $\sigma_{\max}\coloneq 3\sigma_v$ and $\sigma_{\min}\coloneq \sigma_v$, \added{and run five adaptive warm-up iterations followed by five stationary sampling iterations. For DAPS, we use the same cosine-shaped schedule form with the terminal noise level selected empirically following the DAPS implementation and run the same number of sampling iterations.} \deleted{and fix the number of iterations to $N=5$.} In each iteration, to sample from $p_{\bar{\xv}_{k,0}|\bar{\xv}_{k,\sigma^{-1}(\eta)}}$, we run the corresponding reverse diffusion from
\(t=\sigma^{-1}(\eta)\) to \(t=0\) using 100 discretization (numerical integration) steps. For all sampling-based methods, we draw 25 posterior samples for each problem instance and use their average as the final estimate.

\begin{deletedblock}
	To investigate the influence of hybrid prior modeling (Section~\ref{subsec:train_diff_model}) within our DiG framework, we compare three different setups.
	The first setup, denoted as {DiG (full)}, trains the denoisers in the diffusion models of $\sv_1$ and $\sv_2$ in a totally data-driven manner. The second setup, denoted as {DiG (sm)}, replaces the denoiser for $\sv_2$ by a totally model-driven MAP denoiser (cf. \eqref{eq:model-only-DM-training}) with a smoothness-promoting prior:
	\begin{equation}
		f_2\paren{\breve{\sv}_2} \coloneq \lambda\norm{ \Dm \breve{\sv}_2 }^2_2,
	\end{equation}
	where $\lambda>0$ is a tuning parameter and $\Dm$ is the 1d discrete difference operator. The third setup, denoted as {DiG (smld)}, adopts the same model-driven denoiser for $\sv_2$ as DiG (sm), and further trains the denoiser for $\sv_1$ with limited data using the regularized loss function \eqref{eq:data-plus-model-DM-training} equipped with a regularizer promoting sparsity in frequency domain (i.e., approximate periodicity):
	\begin{equation}
		f_1\paren{\breve{\sv}_1} \coloneq \norm{ \Fm \breve{\sv}_1 }_1,
	\end{equation}
	where $\Fm$ is the Fourier transform.
\end{deletedblock}

\deletedparagraph{3) Diffusion models} \added{The diffusion-model training settings used in this experiment are as follows.} The diffusion models for $\sv_1$ and $\sv_2$ follow standard implementations \cite{song2021}, except that the \deleted{architecture of denoisers are changed}\added{denoiser architecture is changed} from U-net to WaveNet \cite{van2016wavenet} so as to better handle 1d signals. The other settings are the same as the previous experiment. \deleted{In DiG~(full), MSDM, and DPnP, we use $50\text{K}$ samples to train the diffusion model for each component; in DiG~(sm), we use $50\text{K}$ samples only for training the diffusion model of $\sv_1$; and in DiG~(smld), we only use $5\text{K}$ samples for the diffusion model of $\sv_1$.}\added{All learned diffusion priors used in this experiment are trained with $50\text{K}$ samples per component.}

\paragraph*{3) \deleted{Results}\added{Main results}} We conduct experiments over a range of signal-to-interference ratio (SIR) and signal-to-noise ratio (SNR) conditions by multiplying the heartbeat signal, motion interference, and noise by proper positive constants. Here, SIR is the ratio between the average power of the heartbeat $\sv_1$ and that of the interference $\sv_2$, whilst SNR is the ratio between the power of $\sv_1$ and that of the noise $\vv$. For every SIR--SNR setting, we evaluate the estimation accuracy of the heartbeat $\sv_1$ for all methods by relative squared error (RSE, see \eqref{eq:error_signal}) over 200 problem instances.

Table~\ref{tab2} reports the RSE of $\sv_1$ for all algorithms under different SIR--SNR settings, and Fig.~\ref{fig1} illustrates \deleted{estimates of $\sv_1$ given by all algorithms in representative problem instances}\added{representative heartbeat recovery results under the most challenging setting, $(\mathrm{SIR},\mathrm{SNR})=(-40.1,-6.8)$ dB}. From Table~\ref{tab2} and Fig.~\ref{fig1}, the proposed \deleted{DiG algorithm consistently outperform all comparison methods across the full range of SIR and SNR settings, achieving more accurate heartbeat recovery especially under strong motion interference or low SNR.}\added{DiG algorithm achieves the lowest RSE in eight out of the nine tested SIR--SNR settings and is slightly below MSDM only in the remaining setting, which has the highest SIR and SNR, $(\mathrm{SIR},\mathrm{SNR})=(-20.1,13.2)$ dB. Overall, DiG provides more accurate heartbeat recovery than the competing methods, with particularly clear advantages under strong motion interference and low SNR.} \deleted{Moreover, DiG (sm) and DiG (smld) only shows moderate performance degradation compared to DiG (full), yet still outperforms other methods under low SIR--SNR settings while using much less training data, which demonstrates the usefulness of our hybrid prior modeling techniques.}

\added{Table~\ref{tab:heartbeat-runtime} reports the wall-clock inference time of all methods in this experiment. The four diffusion-based posterior samplers, MSDM, DPnP, DAPS, and DiG, have similar inference times under the same posterior-sample budget. EMD and VMD are substantially faster, reflecting the computational efficiency of these classical unsupervised decomposition methods, although their recovery accuracy is much lower in the low-SIR and low-SNR regimes in Table~\ref{tab2}. The GP baseline lies between these two groups in runtime.}

\begin{table}[t]
	\centering
	\caption{\added{Wall-clock inference time in the heartbeat extraction experiment. All entries are average seconds per test instance.}}
	\resizebox{\columnwidth}{!}{%
		\begin{tabular}{lccccccc}
			\toprule
			\textbf{Method}         & EMD   & VMD   & GP    & MSDM  & DPnP  & DAPS  & DiG   \\
			\midrule
			\textbf{Total/instance} & 0.006 & 0.034 & 1.798 & 2.756 & 2.702 & 2.672 & 2.708 \\
			\bottomrule
		\end{tabular}}
	\label{tab:heartbeat-runtime}
\end{table}

\begin{figure}[t]
	\centering
	\includegraphics[width=.9\linewidth]{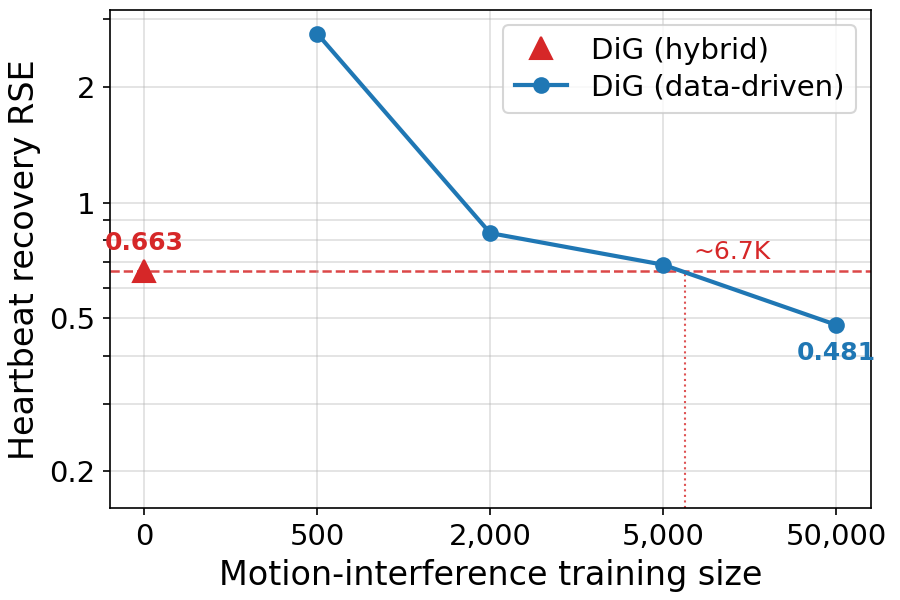}
	\caption{\added{Heartbeat recovery RSE versus motion-interference training size under $(\mathrm{SIR},\mathrm{SNR})=(-40.1,-6.8)$ dB. Blue: data-driven DiG with the indicated training size; red: DiG~(hybrid) with a model-driven MAP motion denoiser. The numeric labels on the red triangle and the rightmost blue marker report RSE values, whereas the \(\sim6.7\text{K}\) annotation reports the interpolated motion-training size at which data-driven DiG reaches the hybrid RSE.}}
	\label{fig:limited-data-heartbeat}
\end{figure}

\begin{addedblock}
	\paragraph*{4) Limited-data robustness and hybrid prior modeling} To examine how learned-prior errors affect the decomposition performance of DiG, we vary the motion-interference training size and evaluate the heartbeat recovery RSE, while keeping all other settings fixed as in the main SIR--SNR experiment. We use the most challenging SIR--SNR setting in Table~\ref{tab2}, namely $(\mathrm{SIR},\mathrm{SNR})=(-40.1,-6.8)$ dB. In addition, to assess the hybrid prior modeling strategy (Section~\ref{subsec:train_diff_model}), we replace the learned denoising network for the motion-interference component by the MAP denoiser in \eqref{eq:model-only-DM-training}. Specifically, we use the smoothness-promoting negative log-prior \(-\log f_2(\breve{\sv}_2)\coloneqq\lambda\|\Dm\breve{\sv}_2\|_2^2\), where \(\Dm\) is the one-dimensional discrete difference operator. The hyperparameter \(\lambda\) was selected empirically and then kept fixed across all test instances.\footnote{Systematic parameter-selection strategies for such regularization weights are a separate topic in regularization theory and inverse problems~\cite{hansen2001,zhang2010}.} We denote this implementation by DiG~(hybrid), and compare it with the data-driven DiG configuration.

	As shown in Fig.~\ref{fig:limited-data-heartbeat}, the heartbeat recovery error of data-driven DiG increases as the motion-interference training size decreases. The degradation is most pronounced at the smallest training size but remains progressive rather than abrupt, suggesting empirical robustness to learned-prior approximation errors in this tested setting. Although DiG~(hybrid) uses no motion-interference training samples and only a mismatched smoothness-promoting prior, it achieves a recovery error of \(0.663\), which is comparable to data-driven DiG trained with roughly \(6.7\text{K}\) motion-interference samples and lower than the errors of the competing diffusion-based samplers at the same SIR--SNR setting in Table~\ref{tab2}, showing that suitable model knowledge can effectively reduce the dependence on training data. Moreover, DiG~(hybrid) takes \(1.59\) seconds to process each instance, compared with \(2.71\) seconds for data-driven DiG reported in Table~\ref{tab:heartbeat-runtime}, indicating that replacing a learned denoising network by a model-based MAP denoiser can also reduce the inference cost.
\end{addedblock}

\begin{figure}[t]
	\centering
	\includegraphics[width=\linewidth]{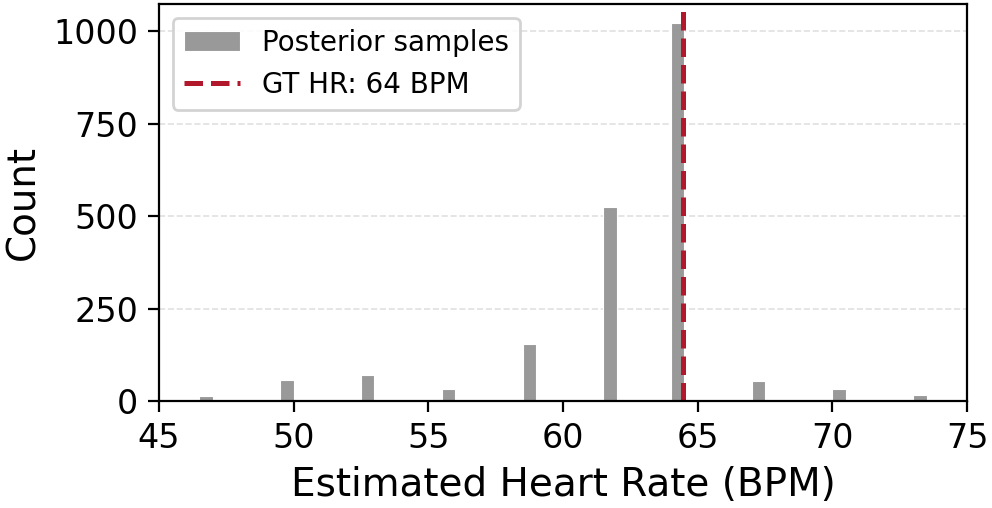}
	\caption{\added{Posterior histogram of the estimated heart rate for a representative instance under $(\mathrm{SIR},\mathrm{SNR})=(-40.1,-6.8)$ dB, computed from 2000 posterior heartbeat samples. Each posterior heartbeat sample gives one heart-rate estimate through the peak frequency of its spectrum, and the dashed line marks the ground-truth heart rate.}}
	\label{fig:heartbeat-uq-hr}
\end{figure}

\begin{addedblock}
	\paragraph*{5) Uncertainty quantification} To show how posterior samples can be used for uncertainty quantification, we take a representative observation from the most challenging $(\mathrm{SIR},\mathrm{SNR})=(-40.1,-6.8)$ dB setting in Table~\ref{tab2} and compute posterior uncertainty summaries from the posterior heartbeat samples generated by DiG. Considering the application goal of heartbeat extraction, we focus on the uncertainty of heart-rate estimation. For each posterior heartbeat sample, we compute its spectrum over the physiologically relevant frequency range and estimate the heart rate by the peak frequency. Fig.~\ref{fig:heartbeat-uq-hr} shows the posterior histogram of the resulting heart-rate estimates, together with the ground-truth heart rate. Such summaries can support downstream heartbeat decisions: concentrated posterior mass yields a more confident heart-rate estimate, whereas a diffuse or multimodal posterior can be used to flag uncertain segments for rejection, additional sensing, or human inspection. Beyond the task-specific histogram shown here, posterior samples can also support other UQ summaries, such as confidence intervals and empirical coverage diagnostics; see, e.g.,~\cite{zhang2024langevinized,dong2023}.
\end{addedblock}

\section{Conclusion}
\label{sec:conclusion}

We proposed a Bayesian framework for signal component decomposition, which combines Gibbs sampling with plug-and-play (PnP) diffusion priors. Our framework supports incorporating \added{component-wise }model-driven and data-driven \deleted{prior knowledge into the diffusion prior}\added{priors into diffusion models} in a unified manner. Moreover, the proposed diffusion-within-Gibbs (DiG) sampler allows component priors to be learned separately and flexibly combined \deleted{without retraining}\added{for different decomposition tasks at inference time}. Under suitable assumptions, we established the asymptotic consistency of the DiG sampler. Experiments have demonstrated the effectiveness of the proposed framework.

\appendices

\section{Proof of Lemma~\ref{lemma:cpdf-uk}}
\label{proof:cpdf-uk}

We rewrite $p_{\uv_k \mid \yv', \sv_{1:K}, \uv_{\neg k}}$ as
\begin{align}
	        & \;\; p_{\uv_k | \yv', \sv_{1:K}, \uv_{\neg k}} \cparen{ \cdot }{ \breve{\yv}, \breve{\sv}_{1:K}, \breve{\uv}_{\neg k} } \nonumber                                                         \\
	\propto & \;\; p_{\uv_k | \sv_{1:K}, \uv_{\neg k}} \cparen{ \cdot }{ \breve{\sv}_{1:K}, \breve{\uv}_{\neg k} } \nonumber                                                                            \\
	        & \hspace{2em}\times p_{\yv' | \uv_k, \sv_{1:K}, \uv_{\neg k}}\cparen{ \breve{\yv} }{ \cdot, \breve{\sv}_{1:K}, \breve{\uv}_{\neg k} } \nonumber                                            \\
	\propto & \;\; p_{\uv_k | \sv_k} \cparen{\cdot}{ \breve{\sv}_k } \times p_{\yv' | \uv_k, \sv_{1:K}, \uv_{\neg k}}\cparen{ \breve{\yv} }{ \cdot, \breve{\sv}_{1:K}, \breve{\uv}_{\neg k} } \nonumber \\
	\propto & \;\; \exp\paren{ -\inv{2\eta_k^2} \norm{ \paren{\cdot} - \breve{\sv}_k }^2_2 -\inv{2\sigma_v^2} \norm{\breve{\rv}'_{\neg k} - \Hm_k \paren{\cdot}  }^2_2 } \label{eq:cpdf-uk-temp1}
\end{align}
where the first proportionality follows from Bayes' rule, the second uses the independence among all components $\sv_k$ and $\uv_k\coloneq \sv_k + \vv_k$, the third follows immediately from \eqref{eq:observation-relax} and \(\vv_k\sim\Normal{\zerov}{\eta_k^2\Imat}\).

Next we expand the Gaussian density \(\NormalPdf{\cdot}{\muv_k}{\Sigmam_k}\) as
\begin{align}
	        & \;\; \NormalPdf{\cdot}{\muv_k}{\Sigmam_k} \nonumber                                                                                                                                               \\
	\propto & \;\; \exp\paren{ - \hf \paren{\cdot - \muv_k}^\top \Sigmam_k^{-1} \paren{ \cdot -\muv_k } } \nonumber                                                                                             \\
	\propto & \;\; \exp\paren{ -\hf \paren{\cdot}^\top \Sigmam_k^{-1} \paren{\cdot} + \muv_k^\top \Sigmam_k^{-1}\paren{\cdot} } \nonumber                                                                       \\
	\propto & \;\; \exp\paren{ -\hf \paren{\cdot}^\top \Sigmam_k^{-1} \paren{\cdot} + \paren{ \inv{\sigma_v^2} \Hm_k^\top \breve{\rv}'_{\neg k} + \inv{\eta_k^2} \breve{\sv}_k }^\top \paren{\cdot} } \nonumber \\
	\propto & \;\; \exp\paren{ -\inv{2\sigma_v^2} \paren{\cdot}^\top \Hm_k^\top\Hm_k \paren{\cdot} + \inv{\sigma_v^2} \paren{\breve{\rv}'_{\neg k}}^{\top} \Hm_k\paren{\cdot}  } \nonumber                      \\
	        & \hspace{2em} \times\exp\paren{ -\inv{2\eta_k^2} \paren{\cdot}^\top \paren{\cdot} + \inv{\eta_k^2} \breve{\sv}_k^\top \paren{\cdot} } \nonumber                                                    \\
	\propto & \;\; \exp\paren{ -\inv{2\eta_k^2} \norm{ \paren{\cdot} - \breve{\sv}_k }^2_2 -\inv{2\sigma_v^2} \norm{\breve{\rv}'_{\neg k} - \Hm_k \paren{\cdot}  }^2_2 } \label{eq:cpdf-uk-temp2}.
\end{align}
Comparing \eqref{eq:cpdf-uk-temp1} and \eqref{eq:cpdf-uk-temp2} completes the proof.

\section{Proof of Lemma~\ref{lemma:cpdf-sk}}
\label{proof:cpdf-sk}

We consider two cases.

\emph{Case 1: \(k\in\Hcal\).}
We have
\begin{align*}
	        & \;\; {p}_{\sv_{k} | \yv', \sv_{\neg k}, \uv_{\Hcal}}\cparen{\cdot}{\breve{\yv}, \breve{\sv}_{\neg k}, \breve{\uv}_{\Hcal}}                                             \\ \propto &\;\; p_{\sv_k | \sv_{\neg k}, \uv_{\Hcal}} \cparen{ \cdot }{ \breve{\sv}_{\neg k}, \breve{\uv}_{\Hcal} } \times p_{\yv' | \sv_k, \sv_{\neg k}, \uv_{\Hcal}} \cparen{ \breve{\yv} }{ \cdot, \breve{\sv}_{\neg k}, \breve{\uv}_{\Hcal} } \\
	\propto & \;\; p_{\sv_k | \uv_{k}} \cparen{ \cdot }{ \breve{\uv}_{k} } \times p_{\yv'| \sv_{\neg k}, \uv_{\Hcal}}\cparen{\breve{\yv}}{\breve{\sv}_{\neg k}, \breve{\uv}_{\Hcal}} \\
	\propto & \;\; p_{\sv_k | \sv_k + \eta_k \nv} \cparen{\cdot}{ \breve{\uv}_k },
\end{align*}
where the first proportionality follows from Bayes' rule, the second is due to the independence among all components $\sv_k$ and \eqref{eq:observation-relax}, the third uses \(\uv_k\coloneq \sv_k+\vv_k\) and \(\vv_k\sim\Normal{\zerov}{\eta_k^2\Imat}\).

\emph{Case 2: \(k\notin\Hcal\).}
We have
\begin{align*}
	        & \;\; {p}_{\sv_{k} | \yv', \sv_{\neg k}, \uv_{\Hcal}}\cparen{\cdot}{\breve{\yv}, \breve{\sv}_{\neg k}, \breve{\uv}_{\Hcal}}                                                                                                             \\
	\propto & \;\; p_{\sv_k | \sv_{\neg k}, \uv_{\Hcal}} \cparen{ \cdot }{ \breve{\sv}_{\neg k}, \breve{\uv}_{\Hcal} } \times p_{\yv' | \sv_k, \sv_{\neg k}, \uv_{\Hcal}} \cparen{ \breve{\yv} }{ \cdot, \breve{\sv}_{\neg k}, \breve{\uv}_{\Hcal} } \\
	\propto & \;\; p_{\sv_k}(\cdot) \times \exp\paren{- \inv{2\sigma_v^2} \norm{ \breve{\rv}'_{\neg k} - \paren{\cdot} }^2_2}                                                                                                                        \\
	\propto & \;\; p_{\sv_k | \sv_k + \sigma_v \nv} \cparen{\cdot}{ \breve{\rv}'_{\neg k} },
\end{align*}
where the first proportionality follows from Bayes' rule, the second is due to the component-independence assumption (Assumption~\ref{assump:independence}) and \eqref{eq:observation-relax}, and the last proportionality can be verified by expanding
\(p_{\sv_k \mid \sv_k+\sigma_v\nv}\cparen{\cdot}{\breve{\rv}'_{\neg k}}\) using Bayes' rule.

Combining the two cases completes the proof.

\section{Proof of Theorem~\ref{thm:consistency}}
\label{proof:consistency}

Let \deleted{us consider the $i$-th DiG iteration for $i\geq N_0$}\added{\(N_{\mathrm{w}}\) be the number of adaptive warm-up iterations in Section~\ref{subsec:warm-up}. For any \(i>N_{\mathrm{w}}\), the stationary sampling stage uses the fixed target parameters \(\sigma_v^{(i)}\equiv\sigma_v\) and \(\eta_k^{(i)}\equiv\eta_k\)}. Then according to the discussion in Section~\ref{sec:relaxation-technique}, the $i$-th DiG iteration \deleted{equipped with parameter annealing}\added{is} equivalent to one iteration of the Gibbs sampling method applied to $p_{\sv_{1:K},\uv_{\Hcal}|\yv'}\cparen{\cdot}{\breve{\yv}}$ (Algorithm~\ref{alg:Gibbs-relax}).

For \deleted{$i\geq N_0$}\added{\(i>N_{\mathrm{w}}\)}, let the Markov transition kernel from the $(i-1)$-th iterate $\paren{\breve{\sv}_{1:K},\breve{\uv}_{\Hcal}}$ of DiG, denoted by $\paren{\sv^{(i-1)}_{1:K}, \uv^{(i-1)}_{\Hcal}}$, to the $i$-th iterate, denoted by $\paren{\sv^{(i)}_{1:K},\uv^{(i)}_{\Hcal}}$, be $M$:
\begin{equation*}
	M(\zv,\zv')\coloneq p_{\sv^{(i)}_{1:K},\uv^{(i)}_{\Hcal} | \sv^{(i-1)}_{1:K}, \uv^{(i-1)}_{\Hcal} }\cparen{\zv'}{\zv}.
\end{equation*}
Since \deleted{each \(p_{\sv_k}\) is strictly positive everywhere}\added{each prior \(p_{\sv_k}\) has a strictly positive Lebesgue density on \(\R^{d_k}\)}, from Lemma~\ref{lemma:cpdf-uk} and Lemma~\ref{lemma:cpdf-sk}, for every $\breve{\yv}$, $\breve{\sv}_{1:K}$ and $\breve{\uv}_{\Hcal}$, the conditional distributions $p_{\uv_k|\yv',\sv_{1:K},\uv_{\neg k}}\cparen{\cdot}{\breve{\yv},\breve{\sv}_{1:K},\breve{\uv}_{\neg k}}$ and $p_{\sv_k|\yv',\sv_{\neg k},\uv_{\Hcal}}\cparen{\cdot}{\breve{\yv},\breve{\sv}_{\neg k}, \breve{\uv}_{\Hcal}}$ in Algorithm~\ref{alg:Gibbs-relax} are strictly positive everywhere. Thus for every $\zv$, $M(\zv,\cdot)$ is strictly positive everywhere, and one can verify that $M$ is $p_{\sv_{1:K},\uv_\Hcal|\yv'}\cparen{\cdot}{\breve{\yv}}$-irreducible and aperiodic.
Moreover, from the properties of the Gibbs sampler \cite[Sec. 2.2]{tierney1994}, the relaxed posterior $p_{\sv_{1:K},\uv_\Hcal|\yv'}\cparen{\cdot}{\breve{\yv}}$ is invariant for $M$. Then it follows directly from \cite[Thm. 1 and Cor. 1]{tierney1994} that
\begin{equation*}
	\lim_{N\to\infty}  \TV\paren{ p_{\sv^{(N)}_{1:K}, \uv^{(N)}_{\Hcal}}, p_{\sv_{1:K}, \uv_{\Hcal}|\yv'=\breve{\yv} } } = 0.
\end{equation*}
Since marginalization does not increase the total variation distance between two probability measures, we immediately obtain the convergence of $p_{\sv^{(N)}_{1:K}}$ to $p_{\sv_{1:K}|\yv'=\breve{\yv}}$.

\bibliographystyle{IEEEtran}
\bibliography{refs}

\end{document}